\newtheorem{Theorem}{Theorem}
\newtheorem{Lemma}{Lemma}
\def\references{\bibliography{1-1-19-1}}
\def\red{\textcolor{red}}
\renewcommand{\baselinestretch}{1.6} 
\newcommand{\single}{\renewcommand{\baselinestretch}{1.2}\normalsize}
\newcommand{\double}{\renewcommand{\baselinestretch}{1.63}\normalsize}
\newcommand{\bea}{\begin{eqnarray*}}
\newcommand{\eea}{\end{eqnarray*}}
\newcommand{\be}{\begin{eqnarray}}
\newcommand{\ee}{\end{eqnarray}}
\newcommand{\ed}{\end{document}}
\newcommand{\btab}{\begin{tabular}}
\newcommand{\etab}{\end{tabular}}
\newcommand{\la}{\label}
\newcommand{\bi}{\begin{itemize}}
\newcommand{\ei}{\end{itemize}}
\newcommand{\bfi}{\begin{figure}}
\newcommand{\efi}{\end{figure}}
\newcommand{\ben}{\begin{enumerate}}
\newcommand{\een}{\end{enumerate}}
\newcommand{\bay}{\begin{array}}
\newcommand{\eay}{\end{array}}
\def\sv{V^\ast(t)}
\def\Vi{V^\ast_i}
\def\Wi{V_i}
\def\F{Fr\'{e}chet}
\def\o{\omega}
\def\O{\Omega}
\def\bco{\iffalse}
\def\var{{\rm var}}
\def\ci{\cite}
\def\cp{\citep}
\def\eps{\varepsilon}
\def\om{\, \o \in \O:\,}
\newcommand{\no}{\noindent}
\newcommand{\bc}{\begin{center}}
\newcommand{\ec}{\end{center}}
\DeclareMathOperator*{\argmin}{argmin}
\begin{document}
\thispagestyle{empty} \single \bc {\bf \sc \Large Modeling  Time-Varying Random Objects and Dynamic Networks\footnote{Research supported by NSF Grants DMS-1712864 and DMS-2014626.}}

\vspace{1cm}

Paromita Dubey$^{\dagger}$ and  Hans-Georg M\"uller$^{\dagger\dagger}$  \\
$^{\dagger}$Department of Statistics, Stanford University\\
$^{\dagger\dagger}$Department of Statistics, University of California, Davis \ec \centerline{April 2021}

\vspace{0.1in} \thispagestyle{empty}
\bc{\bf \sf ABSTRACT} \ec \vspace{-.1in} \no 
 Samples of dynamic or time-varying networks and other random object data such as time-varying probability distributions are increasingly encountered in modern data analysis. Common methods for 
 time-varying data such as functional data analysis are infeasible when observations are time courses of networks or other complex non-Euclidean random objects that are elements of  general metric spaces.  In such spaces, only pairwise distances between the data objects are available and a strong limitation is that  one cannot carry out arithmetic operations due to the lack of an algebraic structure. We combat this complexity by a generalized notion of mean trajectory taking values in the object space. For this, we  adopt pointwise \F \ means and then construct  pointwise distance trajectories between the individual time courses and the estimated \F \  mean trajectory, thus representing the time-varying objects and networks by functional data.   Functional principal component analysis of these  distance trajectories can reveal interesting features of dynamic networks and  object time courses and is useful for downstream analysis. Our approach also makes it possible to study the  empirical dynamics of time-varying objects,  including 
   dynamic regression to the mean or explosive behavior over time.  
 We demonstrate desirable asymptotic properties of  sample based estimators for suitable population targets under mild assumptions.  The utility of the proposed methodology is illustrated with {dynamic networks,  time-varying distribution data and longitudinal growth data.}

\vspace{0.1  in}
\no {KEY WORDS: Empirical dynamics; \F \ mean trajectory; Functional data analysis;  Metric space; Object time courses; Time-varying distributions; Time-varying networks.  
\thispagestyle{empty} \vfill

\newpage
\pagenumbering{arabic} \setcounter{page}{1} \double
\section{INTRODUCTION}
Longitudinal or time-varying data consist of repeated observations for each subject at different time points,  where one has such observations for a sample of independent subjects. Tools for analyzing both univariate and multivariate longitudinal data have been well studied \cp{fitz:08,verb:14,fieu:06,berr:11,zhou:08,xian:13}.  When such data are scalar or Euclidean vectors and densely measured in time,  they can be analyzed as  functional data \cp{rice:04,guo:04:1,yang:07}, where functional data analysis provides a flexible nonparametric framework with  a well established  toolbox. This popular methodology requires a vector space structure of the observed data and is therefore restricted to the case where the  measurements at each fixed time are scalars or Euclidean  vectors \citep{rams:05,ferr:07,horv:12,hsin:15,mull:16:3}.

Time-varying object data are becoming more frequent    and are encountered in  time-varying social networks, traffic networks that change over time,  brain networks between hubs that evolve with age, and many other settings \cp{nie:17}. While such data have  similarities with  densely measured functional data,  the observations at each time point are neither scalars nor vectors as in classical functional data analysis, but instead take values in a general metric space. A major challenge is that in such spaces typical vector space operations such as  addition, scalar multiplication or inner products are not defined.  In general metric spaces, the only information available are  pairwise distances between the random objects at each observation time, and therefore the tools of functional and longitudinal data analysis are not directly applicable.  We develop here a simple method that bypasses  these challenges by focusing on distances as outcomes. 

Models for time courses of non-Euclidean objects have been developed for shape evolution as a continuous diffeomorphic deformation of a baseline shape over time 
\citep{durr:13} and  for the analysis of longitudinal data taking values in smooth Riemannian manifolds \citep{schi:15, mura:15,anir:17,mull:18:4}. These methods  exploit the local Euclidean nature of Riemannian manifolds but are not applicable for the analysis of data objects in more general metric spaces that do not have a natural Riemannian geometry. 
For this general case, a fairly complex and challenging  methodology has been developed in \ci{mull:19:1}. 
We propose here a more straightforward approach that allows us to cut through the challenges 
posed by longitudinal metric space valued data by focusing on the distance of time-varying random objects from the mean trajectory and thereby reducing such data to classical functional data. 

A related topic is 
the analysis of longitudinal functional data, where the observations at each time point are functional rather than scalar. For this scenario, previous approaches  \citep{mull:17:4} have utilized a  tensor product representation of the function-valued stochastic process, an approach for which an underlying  Hilbert space is essential and that cannot be directly extended to non-Hilbertian data that we consider here. While complex longitudinal network data have been extensively studied \citep{snij:05,huis:03,koss:06},   these efforts have been  directed specifically towards studying dynamics of evolution of a single network with a variety of network effects and are not applicable to the study of the dynamics of a sample of network trajectories or longitudinal trajectories of other general data objects.

Our goal in this paper is to provide a straightforward methodology  for analyzing {\it functional object data}, i.e., time-varying random objects including  dynamic networks that live  in general metric spaces. Converting such data to functional data makes it possible to tap the rich existing toolbox of functional data analysis, while   imposing not more than  mild entropy conditions on the underlying object space. We assume that the data objects take values in a totally bounded metric space and that  the random object trajectories are fully observed. Since many dynamic developments of interest  can be expressed in terms of departure of an observed  dynamic process from a baseline process, we use a suitably defined  mean trajectory that takes values in the object space as baseline. 

For metric space valued data, \F \ developed a generalization of the usual population and sample means \citep{frec:48}, which then  gives rise to a generalization of the notion of variance for object data, quantifying the variation of such data  around the \F \ mean. It is thus natural to take as  mean trajectory of the object functional data the trajectory consisting of the pointwise \F \ means,  obtained at each time point. In  order to study individual deviations of each time course from this mean trajectory, we first construct squared distance trajectories of the individual object functions from the \F \ mean trajectory. These squared distance trajectories, which we refer to as subject specific \F \ variance trajectories, are scalar valued functional data  in contrast  to the object trajectories themselves,  and therefore  can be subjected to the standard tools that have been developed for functional data analysis, including the highly successful functional principal component analysis \cp{klef:73,hall:06:1,li:14,chen:15:1,lin:16}. 

One major obstacle in working with subject specific \F \ variance trajectories, which makes it difficult to directly apply functional data methods, is that the population \F \ mean trajectory is not known and has to be estimated from the data. The squared distance trajectories which are used for the analysis are then the squared distances of the individual time courses from the sample \F \ mean trajectory. This makes the subject-specific \F \ variance trajectories dependent and so they cannot be treated as independent observations of random trajectories, which is essential for the application of the usual  functional data analysis tools.  By 
imposing mild assumptions on the entropy of the underlying metric space and on the continuity of the random object trajectories, 
we are able to overcome this problem and to 
establish desirable asymptotic properties of the estimators of suitably defined population targets, including  rates of convergence.

As we demonstrate in various applications, functional principal component analysis of the subject-specific \F \ variance trajectories can lead to interesting insights regarding the behavior of the object trajectories. Clustering is often a useful first step  for  exploratory data analysis, aiming to  identify homogeneous subgroups and patterns that have some meaningful interpretation for the researcher. Functional data are inherently infinite-dimensional and a probability density generally does not exist, which contributes to the challenge of  clustering functional data  \citep{chio:07,jacq:14, tarp:03,ciol:16,suar:16}. An additional difficulty arises when  the observations at each time point are not in a vector space. Eigenfunctions of the covariance surface of the \F \ variance functions can nevertheless pinpoint  predominant modes of variation of the individual \F \ variance trajectories around the average \F \ variance trajectory and projection scores of the subject-specific \F \ variance trajectories along these eigenfunctions can reveal inherent clustering within the object trajectories, as we will illustrate in our data applications. 
 
 
 Identifying extremes and potential outliers is challenging for functional data because the observations  at a given time point itself may not be unusual in their value  but the overall shape of an observed curve may be very different from that of the bulk of curves.  Object time courses are even more intractable. Statistical data depth is a concept introduced to measure the ``centrality"  or the ``outlyingness"  of an observation within a given data set or an underlying distribution and this concept has been extended to functional data in recent years \citep{lope:09, nagy:17, nagy:19, agos:18}, where it has been used widely for the detection of extremes and potential  outliers in functional data \citep{ren:17,febr:08,arri:14,roma:13}. 
 
 An important aspect of our analysis is that conclusions about the behavior of object time courses are drawn based on their squared distances from the \F \ mean trajectory. The \F \ mean trajectory is a representative for the most central point for a sample of object functions and the subject specific \F \ variance time courses carry information about the deviations of individual trajectories from the `central' trajectory,  which leads to a central-outward ordering for the sample trajectories. We show that principal component projection scores of the subject specific \F\ variance time courses along eigenfunctions are useful for visualization of the longitudinal object data and also for the detection of extremes. 
Another aspect of interest is the dynamics of the evolving object trajectories, especially whether they tend to move closer to the mean function as time progresses, so that a far away trajectory will tend to be drawn towards the center (centripetality)  or will move further away from the center (centrifugality),  as time progresses.

The paper is organized as follows: In Section 2,  we introduce our  framework and  define the  population targets and the corresponding sample based estimators. The theoretical properties of the estimators are established in  Section  3. {This is  followed by  data illustrations in Section 4,   where we apply the  proposed method for the longitudinal network generated by the Chicago Divvy bike data for the years 2014 to 2017, for  the longitudinal  annual fertility data for  26 countries over  34 calendar years from 1976 to 2009 and for time varying shape data using the Z\"urich longitudinal growth study. We  also demonstrate  the proposed quantification of the underlying dynamics of the observed processes.}  Simulation results for a  sample of time-varying networks are presented in  Section 5, followed by a discussion in Section 6. Auxiliary results and proofs can be found  in the online supplement.

\section{PRELIMINARIES AND ESTIMATION}
\label{sec: prelim}
We consider an object space $(\O,d)$ that is a  totally separable bounded metric space and an $\O$-valued stochastic process  $\{X(t)\}_{t \in [0,1]}$, alternatively referred to as $X$ for ease of notation, and assume that   
one observes  a sample of random object trajectories $X_1, X_2,\dots,X_n$, which are independently and identically distributed copies of the random process $X$ with respect to an underlying probability measure $P$.
For each subject-specific trajectory $X_i$, we aim to quantify  its deviation from a baseline object function, which can be thought of as the mean or  typical population trajectory. A natural baseline for real-valued functional data is the mean function. For more general general object-valued trajectories,  we propose to use the population \F \ mean trajectory as baseline function, defined as the pointwise \F \ mean function,  where for given $t \in [0,1]$, the population  and sample \F \ mean trajectories at $t$ are defined as
\be  \la{mean}
\mu(t)=\argmin_{\o \in \O} E\left(d^2(X(t),\o ) \right), \quad 
\hat{\mu}(t)=\argmin_{\o \in \O} \frac{1}{n} \sum_{i=1}^{n}d^2(X_i(t),\o),
\ee respectively.  Here we assume that for all $t \in [0,1]$ these minimizers exist and are unique. {While the existence and uniqueness of \F \ means is not guaranteed in general spaces \citep{bhat:03}, for the case of Hadamard spaces, which have globally nonpositive curvature,  \F \ means as defined in equation \eqref{mean} exist and are unique \cp{stur:03}. For positively curved spaces see \cite{ahid:20}.}

The target functions for our analysis then ideally would be the  functions
\be \la{or} V^\ast_i(t)=d^2(X_i(t),\mu(t)),  \,\, t \in [0,1],\ee 
which correspond to  the pointwise squared distance functions of the subject trajectories $X_i$ from the population \F \ mean function $\mu=\mu(t)$ for the subject trajectories $X_i$.  These can be characterized as the subject-specific oracle \F \ variance trajectories. They are however unavailable, since   the population \F \ mean trajectory $\mu$ is unknown and needs  to be estimated from the data. From   \eqref{or} one obtains the  data-based version 
\be \la{db} V_i(t)=d^2(X_i(t),\hat{\mu}(t)),  \, t \in [0,1],\ee 
where $\hat{\mu}$ is as in \eqref{mean} and we refer to the $V_i=V_i(t)$ as the  sample \F \ variance trajectories and write  $V(t)=d^2(X(t),\hat{\mu}(t))$ for the generic version. Since they all depend on $\hat{\mu}(t)$, the sample \F \ variance trajectories are dependent and cannot be treated as independent realizations of a stochastic process, which is the standard framework for functional data analysis, thus posing a challenge for theory. 

Suppose for the moment that we have available an i.i.d. sample of oracle  \F \ variance trajectories $V^\ast_i$, with generic version denoted by $V^\ast$. Then a typical dimension reduction step in Functional Data Analysis (FDA) is to apply  Functional Principal Component Analysis (FPCA), which 
facilitates the conversion of the  functional data $V^\ast_i$ to a countable sequence of uncorrelated random variables, the functional principal components (FPCs),  where the sequence of FPCs is often  truncated at a finite dimensional random vector to achieve  dimension reduction. 
FPCA is based on using the eigenfunctions of the auto-covariance operator of the process $V^\ast$. This  is an integral operator,  a trace class and moreover compact Hilbert Schmidt operator \citep{hsin:15} that  has the population \F \ covariance surface $C$ as its kernel, where 
\begin{equation} \la{C}
C(s,t)=E\left(d^2(X(s),\mu(s)) d^2(X(t),\mu(t)) \right)-E\left(d^2(X(s),\mu(s)) \right) E\left(d^2(X(t),\mu(t)) \right).
\end{equation}

The eigenvalues of the auto-covariance operator are nonnegative as the covariance surface is symmetric and nonnegative definite. By Mercer's theorem,
\begin{equation*}
C(s,t)=\sum_{j=1}^{\infty} \lambda_j \phi_j(s) \phi_j(t), \quad   s, t \in [0,1],
\end{equation*}
with uniform convergence, where the $\lambda_j$ are the eigenvalues of the covariance operator, ordered in decreasing order, and $\phi_j(\cdot)$ are the corresponding orthonormal eigenfunctions. 

This leads to the Karhunen-Lo\`{e}ve expansion of the oracle \F \ variance trajectories,
\begin{equation} \la{klo}
V^\ast_i(t) = \nu^\ast(t)+\sum_{j=1}^{\infty} B_{ij} \phi_{j}(t),
\end{equation}
with $L^2$ convergence. Here  $\nu^\ast$ is the mean function of the subjectwise  \F \ variance functions, $\nu^\ast(t)= E(d^2(X(t),\mu(t))),$ the 
population \F \ variance function.  The  $B_{ij}$ are the FPCs,  which are uncorrelated across $j$ with $E(B_{ij})=0$, ${\rm var}(B_{ij})=\lambda_j$ and
$B_{ij}=\int (V^\ast_i(t)-\nu^\ast(t)) \phi_j(t)dt.$
If $\mu$ is known, according to \eqref{C}, the oracle estimator of the \F \ covariance surface is 
\begin{equation} \la{Ct}
\tilde{C}(s,t)=\frac{1}{n} \sum_{i=1}^{n} \Vi(s)\Vi(t) -\frac{1}{n} \sum_{i=1}^{n} \Vi(s)\,\,\frac{1}{n}  \sum_{i=1}^{n} \Vi(t).
\end{equation}
Under mild assumptions  on the functional trajectories $V_i^*$,   standard asymptotic theory from functional data analysis shows that  this estimator  has desirable asymptotic properties and converges to the true covariance surface $C$  \eqref{C} \cp{hall:06:1}.

As  the population \F \ mean function $\mu$ in reality is however unknown, we need to replace it by the  sample based estimator $\hat{\mu}$  of the \F \ covariance surface, 
\begin{equation} \la{Ch}
\hat{C}(s,t)=\frac{1}{n} \sum_{i=1}^{n} \Wi(s)\Wi(t) -\frac{1}{n} \sum_{i=1}^{n} \Wi(s)\,\,\frac{1}{n}  \sum_{i=1}^{n} \Wi(t),
\end{equation}
using the data-based distance processes $\Wi(t)$ (\ref{db}) that depend on  estimates  $\hat{\mu}$. 
We show in Section 3  that $\hat{C}$ is asymptotically close to the oracle estimator of the \F \ covariance surface $\tilde{C}$  under mild regularity conditions on the metric space and the object functions and therefore has desirable asymptotic properties as an estimator of the population \F \ covariance surface. 

Estimates of  eigenvalues and eigenfunctions are obtained as the empirical eigenvalues and eigenfunctions of the integral covariance operator with covariance 
kernel  $\hat{C}$, and will be denoted by $\hat{\lambda}_j$ and $\hat{\phi_j}$, ordered  in decreasing order of the  eigenvalues.  Eigenfunctions $\phi_j$ can be interpreted as coordinate directions, thereby providing the basis for principal modes of variation of the subject specific oracle \F \ variance trajectories around the population \F \ variance trajectory. Modes of variation \cp{cast:86,lila:19} are useful to  quantify the departure of a random object trajectory from the \F \ mean function.  The eigenfunctions can be viewed as {\it modes  of  outlyingness}  of the subject-specific trajectories. The estimates of the projection scores $B_{ij}$ of the $i^{th}$ oracle \F \ variance trajectory on  the $j^{th}$ 
eigenfunction are given by
\begin{equation} \la{pre} 
\hat{B}_{ij}=\int_{0}^{1} \left(V_i(t)-\frac{1}{n}\sum_{k=1}^{n} V_k(t)\right) \hat{\phi}_j(t) dt.
\end{equation}
We show in  section \ref{sec: theory}  that under regularity assumptions, the   
$\hat{B}_{ij}$ are asymptotically close to the $B_{ij}$ in (\ref{klo}). These scores are useful for visualizing common traits in object trajectories and for detecting  extremes,  homogeneous subgroups or clusters in the data.

\section{THEORY}
\label{sec: theory}

We establish  asymptotic properties of the empirical estimators of the population targets as described in section \ref{sec: prelim}, assuming that  the realizations of the object valued random process $X(t),\,\, t \in [0,1]$,  have continuous sample paths almost surely and  take values in a totally bounded metric space $(\O,d)$ and the following conditions are satisfied: 
\begin{itemize}
	\item[(A1)] The objects $\mu(s)$ and $\hat{\mu}(s)$ exist and are unique, the latter almost surely, for each $s \in [0,1]$. Additionally for any $\epsilon > 0$,
	\begin{equation*}
	\inf_{s \in [0,1]} \inf_{\om d(\o,\mu(s)) > \epsilon} E(d^2(X(s),\o))-E(d^2(X(s),\mu(s))) > 0
	\end{equation*}
	and there exists a $\tau=\tau(\epsilon) > 0$ such that
	\begin{equation*}
	\lim_{n \rightarrow \infty}P \left(\inf_{s \in [0,1]} \inf_{\om d(\o,\hat{\mu}(s)) > \epsilon} \frac{1}{n} \sum_{l=1}^{n} \lbrace d^2(X_i(s),\o)-d^2(X_i(s),\hat{\mu}(s))\rbrace \geq \tau(\epsilon) \right) = 1.
	\end{equation*}
	
	
		\item[(A2)]There exists $\rho> 0, \ D > 0$ and $\beta > 1$ such that 
	\begin{equation*}
	\inf_{s \in [0,1]} \inf_{\om d(\o,\mu(s)) < \rho} \lbrace E(d^2(X(s),\o))-E(d^2(X(s),\mu(s))) -D d^{\beta}(\o,\mu(s))\rbrace \geq 0.
	\end{equation*}
	
	\item[(A3)] For some $0 < \alpha \leq 1$,  the random function $X(\cdot)$ defined on $[0,1]$ and taking values in $\O$, where we denote the space of all such functions as $\O^{[0,1]}$,  is $\alpha$-H\"{o}lder continuous,  
	i.e., for nonnegative $G: \O^{[0,1]}\rightarrow \mathbb{R}^{+}$ with  $E\left(G(X)^2\right) < \infty$ , it holds almost surely,
	\begin{equation*}
	d(X(s),X(t)) \leq G(X) |s-t|^\alpha.
	\end{equation*}
	
	\item[(A4)]	
	For $I(\delta)=\int_{0}^{1} \sup_{s \in [0,1]} \sqrt{\log N({\eps\delta},B_{\delta}(\mu(s)),d)} d\eps $ it holds that  $I(\delta)=O(1)$ as $\delta \rightarrow 0$.  Here $B_{\delta}(\mu(s))=\{\o \in \O: d(\o,\mu(s)) < \delta\}$ is the $\delta$-ball around $\mu(s)$ and $N(\gamma,B_{\delta}(\mu(s)),d)$ is the  covering  number, i.e.,  the minimum number of balls of radius $\gamma$ required to cover $B_{\delta}(\mu(s))$ 	\cp{well:96}. 

\end{itemize} 

 Assumption (A1) guarantees uniform convergence of the sample \F \ mean trajectory to its population target as it implies $\sup_{s \in [0,1]} d(\hat{\mu}(s),\mu(s))=o_P(1)$ \cp{mull:19:1}; for convenience, we  state this result as  Lemma  \ref{lma: mean_uniform} in the Supplement without proof. Measurability issues of the sample \F \ mean function can be dealt with in a similar fashion as $M$-estimators in general by considering outer probability measures; for more detailed  discussion of the  measurability issues see sections 1.2, 1.3 and 1.7 of \cite{well:96}. Assumptions of type (A2) are standard for M-estimators and characterize the local curvature of the target function to be minimized near the  minimum; this curvature is characterized by $\beta$, which features in the resulting rate of convergence. Lemma \ref{lma: rate} in section \hyperref[supp1]{A.3} of the Supplement provides the rate of convergence of the \F \ mean function $\hat{\mu}(s)$, and corrects an algebraic error in Theorem 3 in  \cite{mull:19:1}. {When $(\O,d)$ is a Hadamard space, $\beta$ takes the value 2 \citep{stur:03} for any probability measure on $\O$, and therefore assumptions (A1) and (A2) are satisfied. } 
 	
Assumptions  (A3) and (A4) are required for measuring the size of the space of object functions and imply  an entropy condition on the object function space, which then leads to uniform convergence of the plug-in estimator of the \F \ covariance surface $\tilde{C}$ in \eqref{Ct} given by $\hat{C}$ \eqref{Ch} at a fast rate.  In (A3), we assume that the rate of H\"{o}lder continuity of the random object trajectories is fixed,  with the H\"{o}lder constant having a finite second moment, which means that $E  \left \lbrace \left( \sup_{s \neq t, s,t \in [0,1]}d(X(s),X(t))/|s-t|^{\alpha} \right)^2 \right \rbrace $ is finite. This assumption is a mild smoothness assumption satisfied for certain values of $\alpha$  by many common Euclidean-valued random processes, including the Wiener process for $\alpha =\frac{1}{2}$. 

Assumption (A3) together with the curvature condition in (A2) implies H\"older continuity of the \F \ mean function $\mu(\cdot)$. For details, we refer to the proof of Lemma \ref{lma:entropy} in section \hyperref[supp1]{A.3} of the Supplement. 
Assumption (A4) is a bound on the covering number of the object metric space and is satisfied by several commonly encountered  random objects, including random probability distributions equipped with the 2-Wasserstein metric, covariance matrices of fixed dimension and graph Laplacians of networks with fixed number of nodes \cp{mull:18:5,mull:19:3}. We provide a proof and further discussion on this in section \hyperref[supp2]{A.5} of the Supplement, where we show that   the space of univariate distributions with the 2-Wasserstein metric and the space of graph Laplacians with the Frobenius metric  satisfy assumptions (A1)-(A4). Theorem \ref{lma: gauss} below gives the uniform convergence of  $(s,t) \mapsto \sqrt{n} (\hat{C}(s,t)-C(s,t))$.

\begin{Theorem} 
	\label{lma: gauss}
	Under assumptions (A1)-(A4), 
    \begin{equation*}
		 \sup_{s,t \in [0,1]}\left|\hat{C}(s,t)-{C}(s,t)\right|=O_P\left(\max \left\lbrace\left(\frac{\sqrt{\log{n}}}{n}\right)^{\frac{1}{2(\beta-1)}},\frac{1}{\sqrt{n}}\right\rbrace \right).
\end{equation*}
\end{Theorem}

This convergence result provides the major justification that observed processes $V$ may be used for the proposed FPCA instead of the oracle processes $V^\ast$ when the mean function has to be estimated, as will invariably be the case in practical applications. 
Uniform convergence and rates of convergence of $|\lambda_j-\hat{\lambda}_j|$ and $\sup_{s \in [0,1]} \left| \hat{\phi}_j(s)-\phi_j(s)\right|$ then  follow  from Theorem \ref{lma: gauss} by standard perturbation results along the lines of  the Davis-Kahan theorem,  e.g., Lemma 4.3 of  \cite{bosq:00}. Our next  result provides a quantification of the  asymptotic closeness of the sample based estimators of the FPCs and the oracle FPCs and requires the additional assumption 

\begin{itemize} 

	\item[(A6)]For each $j \geq 1$, the eigenvalue $\lambda_j$, as defined in section \ref{sec: prelim}, has multiplicity 1, i.e., it holds that $\delta_j > 0$ where $\delta_j=\min_{1\leq l\leq j}(\lambda_l-\lambda_{l+1})$.
\end{itemize}

\begin{Theorem}
	\label{lma: eigen}
	Under assumptions (A1)-(A6), we have 
\bea
|\hat{\lambda}_j-\lambda_j|&=&O_P\left(\max \left\lbrace\left(\frac{\sqrt{\log{n}}}{n}\right)^{\frac{1}{2(\beta-1)}},\frac{1}{\sqrt{n}}\right\rbrace \right),\\
\sup_{s \in [0,1]} \left| \hat{\phi}_j(s)-\phi_j(s)\right| &=&O_P\left(\frac{1}{\delta_j}\max \left\lbrace\left(\frac{\sqrt{\log{n}}}{n}\right)^{\frac{1}{2(\beta-1)}},\frac{1}{\sqrt{n}}\right\rbrace \right),\\
|\hat{B}_{ij}-B_{ij}|&=&O_P\left(\max\left\lbrace \frac{1}{\delta_j},1\right\rbrace \max \left\lbrace\left(\frac{\sqrt{\log{n}}}{n}\right)^{\frac{1}{2(\beta-1)}},\frac{1}{\sqrt{n}}\right\rbrace \right).
\eea
\end{Theorem}
Examples of object spaces that satisfy the assumptions  include graph Laplacians of connected, undirected and simple graphs corresponding to networks of fixed dimension equipped with the Frobenius metric \citep{mull:18:5}, as well as  univariate probability distributions equipped with the $2$-Wasserstein metric and  correlation matrices of a fixed dimension equipped with the Frobenius metric  \citep{mull:19:3}. For all of the above examples, one has $\beta=2$ in assumption (A2). For a detailed discussion, see Section \hyperref[spaces]{A.5} in the Supplement.

\section{DATA ILLUSTRATIONS}
\label{sec: data}

\bco
\subsection{New York Taxi Cab Data}
The New York City Taxi and Limousine Commission (NYC TLC) provides data on pick-up and drop-off dates/times, pick-up and drop-off locations, trip distances, itemized fares, rate types, payment types, and driver-reported passenger counts for yellow and green taxis available at \url{http://www.nyc.gov/html/tlc/html/about/trip_record_data.shtml}. \newline These taxi trip details are recorded at a resolution of seconds in time and reveal  patterns of urban transportation \cp{vazi:18}.   We  view these data as time-varying networks that are defined by how many people traveled between places of interest and study the evolution of these networks during a typical day. We restrict our analysis to the yellow taxis in the Manhattan area and constructed samples of time-varying networks observed over a sample  of 363 days in the year 2016 for which complete records were available. 

To construct the networks, we identified   69 zones (subareas), which then   form the nodes of the network.  Each day is broken into 5 minute intervals, and for each of these time intervals we constructed a weighted network with these 69 nodes, where   edge weights represent the number of people who traveled between the  pairs of zones (nodes) connected by the corresponding edge within each  5 minute interval.  This generates  a sample of time-varying networks, one time-varying network for each day, sampled over 24 hours for every 45 minutes, where we have 363 such networks, one for   each of thew 363 days in 2016 for which complete records are available. We thus arrive at a sample of 363 time-varying networks.  The observations  at each time point during the 24 hour period correspond to a 69 dimensional graph Laplacian that characterizes the corresponding network between the 69 zones of Manhattan for that particular 5 minute interval. \red{Recently, the  New York taxi data have been analyzed as time-varying networks by various authors, using approaches that substantially differ from the straightforward method proposed here \cp{chen:19,chu:19,mull:19:1}.}

For a network with $r$ nodes, the adjacency matrix is a $r \times r$ matrix $A$ whose $(i,j)^{th}$ entry $a_{ij}$ represents the edge weight between nodes $i$ and $j$, with  corresponding graph Laplacian 
$L=D-A$,
where $D$ is the degree matrix, whose off diagonal entries are zero, with diagonal entries  $d_{ii}=\sum_{j=1}^r a_{ij}$.
We used the Frobenius metric  as a distance measure between graph Laplacians. The sample \F \ mean trajectory at a particular time point then  corresponds to the sample average of the graph Laplacians of 363 networks corresponding to different days for that time point. The movie titled ``mean\_NY.mp4" provided in the Supplement  visualizes  the sample \F \ mean network at different times of the day.  \red{the following goes to the Supplement:  
 with only the top 10\% of edges in increasing order of edge weights for clarity in visualizing the networks.}

We then obtained  the \F \ variance  trajectories for each of the 363 days.  For each fixed time point during the day that indexes one of  the time-varying networks in the sample, we take the squared Frobenius distance between the graph Laplacian of that particular day at the selected  time point and  the \F \ mean graph Laplacian at the same time point. Finally we applied FPCA  to the 363 \F \ variance trajectories. 
The mean \F \ variance function of  the daily graph Laplacians of travel networks between the 69 regions of interest in Manhattan as a function of the time within  the day is shown in Figure  \ref{fig: f5}. \red{it would be useful to show the trajectories themselves, also make the tickmarks larger and omit the title ``Mean Function" in the plot, it is not needed, best omit titles in all plots}  

The predominant directions of variation of the daily \F \ variance trajectories around the \F \ mean function are visualized by the first four eigenfunctions \red{the last one explains less than 5\%, would omit it} in 
Figure \ref{fig: f7}, where the  first four functional principal component scores jointly explain more than 97\% of the variation in the \F \ variance trajectories.
\begin{figure}
	\centering
	\includegraphics[scale = .5]{ny1}
	\caption{Mean function of the squared distance trajectories at 5 minute intervals of graph Laplacians of daily travel networks between 69 zones of Manhattan which is  the \F \ mean graph Laplacian function for the Frobenius metric.}
	\label{fig: f5}
\end{figure}
The first eigenfunction reflects the variability between 12 midnight to 6am, and then variability throughout the day, the second a contrast between early and late morning variability, the third a contrast between noon to evening and morning-late night variability, and the fourth a contrast between morning and late night variability. 

\bco

\begin{figure}
	\centering
	\includegraphics[scale = 0.6]{ny5}
	\caption{Fitted covariance surface for the squared distance trajectories at 5 minute intervals of graph Laplacians of daily travel networks between 69 zones of Manhattan.} 
	\label{fig: f6}
\end{figure}

\fi
\bco
\begin{figure}
	\centering
	\includegraphics[scale = .5]{ny2}
	\caption{Eigenfunctions for the functional principal component analysis of the squared distance trajectories at 5 minute intervals of graph Laplacians of daily travel networks between 69 zones of Manhattan using the Frobenius metric for graph Laplacians. The solid red line  corresponds to the first eigenfunction, which explains 68.32\% of variability in the trajectories, the dashed blue line to the second, which explains 16.86\%, the dotted green line to the third,  which explains 9.26 \% and the brown dot dash line to the fourth, which explains 2.83\%. \red{on all plots make tick marks larger and omit titles} }
	\label{fig: f7}
\end{figure}
\fi
\bco

Analyzing the FPC scores of the daily \F \ variance trajectories along the first and second eigenfunctions, 
Figure \ref{fig: f8} reveals  several interesting patterns in the daily \F \ variance trajectories. Weekdays and weekends form very distinguishable clusters. Holidays  have similar patterns as Sundays. The plots of the FPC scores pinpoint  some  outliers that can be linked with special days. Outliers in the first FPC correspond to New Year's day and November 6, 2016, which is the day when daylight saving ended, and a major outlier in  the second FPC is Christmas day and  in the third FPC it is Independence Day, July 4, 2016.  Using the FPCs, holidays and weekends can be clearly distinguished from working days. An application of a Bayes classifier for functional data \citep{mull:17:6} with the daily \F \ variance trajectories as predictors and ``weekdays" and ``weekends and holidays" as the binary response using the first 200 days of 2016 as the training data and the last 163 days as test data gave a low misclassification rate of 0.0061 when using 4 principal components in the classifier. \red{try using only the first three. Also we should add plots of all trajectories, where the outliers are highlighted ion color}

\begin{figure}
	\centering
	\includegraphics[scale = 0.5]{ny3}
	\caption{Pairwise plots of the FPC scores. Second versus first (top left), third versus first (top right) and fourth versus first (lower left) scores, where   `W' stands for regular Mondays to Thursdays, `Fri' for Fridays, `Sat' for Saturdays, `Sun' for Sundays and `Hol' for special holidays. 
	}
	\label{fig: f8}
\end{figure}

\begin{figure}[t!]
	\centering
	\includegraphics[scale = 0.75]{ny4}
	\caption{First FPC scores corresponding to weekdays  versus the week of the year (left) and second FPC scores corresponding to weekdays  versus the week of the year (right).}
	\label{fig: f9}
\end{figure}

\red{for this perhaps use the eigenanalysis for only M-F and omit Sat-Sun}  In Figure \ref{fig: f9} we plot second against first FPC scores for each of the  work days, Monday to Friday, for the 52 weeks in 2016. We observe that there is some association between the projection scores and the week of the year. For example, for typical work days, during summer months the scores show much less variability as compared to winter months, which is expected as there could be heavy snow or freezing temperatures  that may impact the taxi networks during the winter months more than weather impacts during the summer months.  The plots also reveal several interesting outliers. For Mondays, the outliers correspond to Memorial Day, Independence Day, Labor Day and Boxing Day. For Thursdays, the single outlier corresponds to Thanksgiving and for Friday, the outlier corresponds to New Years Day. 

\bco

\red{\textit{Empirical dynamics of New York taxi data}: Figure \ref{fig: nynew1} illustrates the estimated coefficient of determination function $\hat{R}^2(t)$ and the slope function $\hat{\beta}(t)$ after we implemented the empirical dynamics modeling in \eqref{dyn3} for the sample of time varying taxi trip networks in Manhattan for the New York taxi data. The estimates are different for weekdays and weekends. }
	
\red{For weekdays, $\hat{R}^2(t)$ peaks at 1 am and gradually declines before the morning commute. There are two small peaks, around 9 am and 12:30 pm. It then declines at 4 pm and starts to rise after 7 pm. The slope function is negative between 1 am to 6:30 am and between 11 am to 5 pm, implying that the distance of the network trajectories from the \F \ mean trajectory show dynamic regression to the mean, i.e. they tend to move closer towards their expected behavior. The slope function is positive between 6:30 am to 10:30 am and 6 pm to midnight, which correspond to the periods of morning commute and the evening period. During this time the distance of the network trajectories from the \F \ mean trajectory show explosive behavior, i.e. they tend to move away from their expected behavior. This behavior is expected for morning commute time during weekdays. Several zones in Manhattan have a busy nightlife even during weekdays which attributes for the explosive behavior during this time.
}

\red{During the weekends, $\hat{R}^2(t)$ peaks at 1 am and declines to zero right after 4 pm and starts to rise again after 7 pm. The slope function is negative till before 6 pm, which means that the \F \ variance trajectories tend to move closer to expected behavior during this period. After 6 pm, the slope is positive, indicating that the \F \ variance trajectories have explosive behavior after 6 pm during weekends. It is not surprising that the \F \ variance trajectories tend to move away from their expected behavior during weekend evenings as several zones in Manhattan have a very active nightlife during the weekends. 
}
\begin{figure}
	\centering
	\includegraphics[scale = .6]{ny_new_1}
	\caption{Smooth estimate of the coefficient of determination $R^2(t)$ for weekdays (top left) and weekends (lower left) and the varying coefficient function $\beta(t)$ for weekdays (top right) and weekends (lower right)\citep{mull:10:2} capturing empirical dynamics of the distance trajectories of the New York taxi data.}	\label{fig: nynew1}
\end{figure}

\fi

\subsection{Chicago Divvy Bike Data}

The Chicago Divvy bicycle sharing system makes historical bike trip data publicly available at \url{https://www.divvybikes.com/system-data}. The data set includes trip start and end dates and times, duration, start and end locations and anonymized rider data. The  bike trip details are recorded at a resolution of seconds in time and include trips between 580 bike stations in Chicago and two adjacent suburbs. We used a cleaned version of the  trip records of duration one hour or less between 2013 to 2017, which are  available at \url{https://www.kaggle.com/yingwurenjian/chicago-divvy-bicycle-sharing-data}. \newline 
These data were also analyzed by \ci{gerv:19}, who applied a functional version of  point processes.  Studying patterns in the daily evolution of the number of bike rides between various bike stations can provide insights into the Divvy bike sharing system and patterns of transport in the city. We study time-varying networks that are defined by the number of bike trips between stations of interest and the evolution of these
networks during a typical day. We constructed samples of time-varying networks observed over a sample of 1457 days in the years 2013 to 2017. 

We focus our analysis on data pertaining to  the area east of Greektown, south of Wrigley field and north of Chinatown containing  the Lakefront trails, the Navy pier and many other popular destinations. We considered 112  popular bike stations in this region and each day was  broken into 20 minute intervals. On each of these intervals we constructed a network with 112 nodes, each one corresponding to one of the bike stations and edge weights representing the number of recorded  bike trips between the pairs of stations that define  the edges of the network  within the 20 minute interval.  This generates a time-varying network for  each of the 1457 days in the years 2013 to 2017 for which complete records are available. 
The time points where the network is sampled over the course of each day were chosen as the midpoints of  the 20 minute intervals of a day. The  observations at each time
point correspond to a 112 dimensional graph Laplacian that characterizes the  network between the 112 bike stations of interest for that particular 20 minute interval. For a
network with $r$ nodes, the adjacency matrix is a $r \times r$ matrix $A$, where the $(i,j)^{th}$ entry $a_{ij}$ represents the edge weight between nodes $i$ and $j$. The graph Laplacians $L$ are given by $L=D-A$, where $D$ is the degree matrix, the off diagonal entries of which are zero, with diagonal entries $d_{ii}=\sum_{j=1}^{r} a_{ij}$. The graph Laplacians determine the network uniquely. 

We used the Frobenius metric as a distance measure between graph Laplacians. The sample \F \ mean trajectory at a particular time point therefore is the sample average of the graph Laplacians of 1457 networks corresponding to different days
for that time point. We then obtained the \F \ variance trajectories for each day, which for a given day and  time point correspond to the squared Frobenius distance between the graph Laplacian  and  the \F \ mean graph Laplacian,  and then applied functional principal component analysis (FPCA)   for the resulting 1457 \F \ variance trajectories. 

The mean \F \ variance trajectory of the daily graph Laplacians for the Divvy bike trip networks as a 
function of the time within the day, which quantifies the average squared deviation from the mean trajectory,  is shown in the left plot of Figure \ref{fig: b1}. The peaks are at 9am with elevated mean variation between between 7am to 10am and at 6pm with elevated levels between 4pm to 7pm,  which reflect morning and late afternoon and early evening  commuting surges, where the network variation is seen to be highest. 

\begin{figure*}[t!]
	\centering
	\hspace{-10mm}
	\begin{subfigure}[t]{0.5\textwidth}
		\centering
		\includegraphics[width=0.9\textwidth]{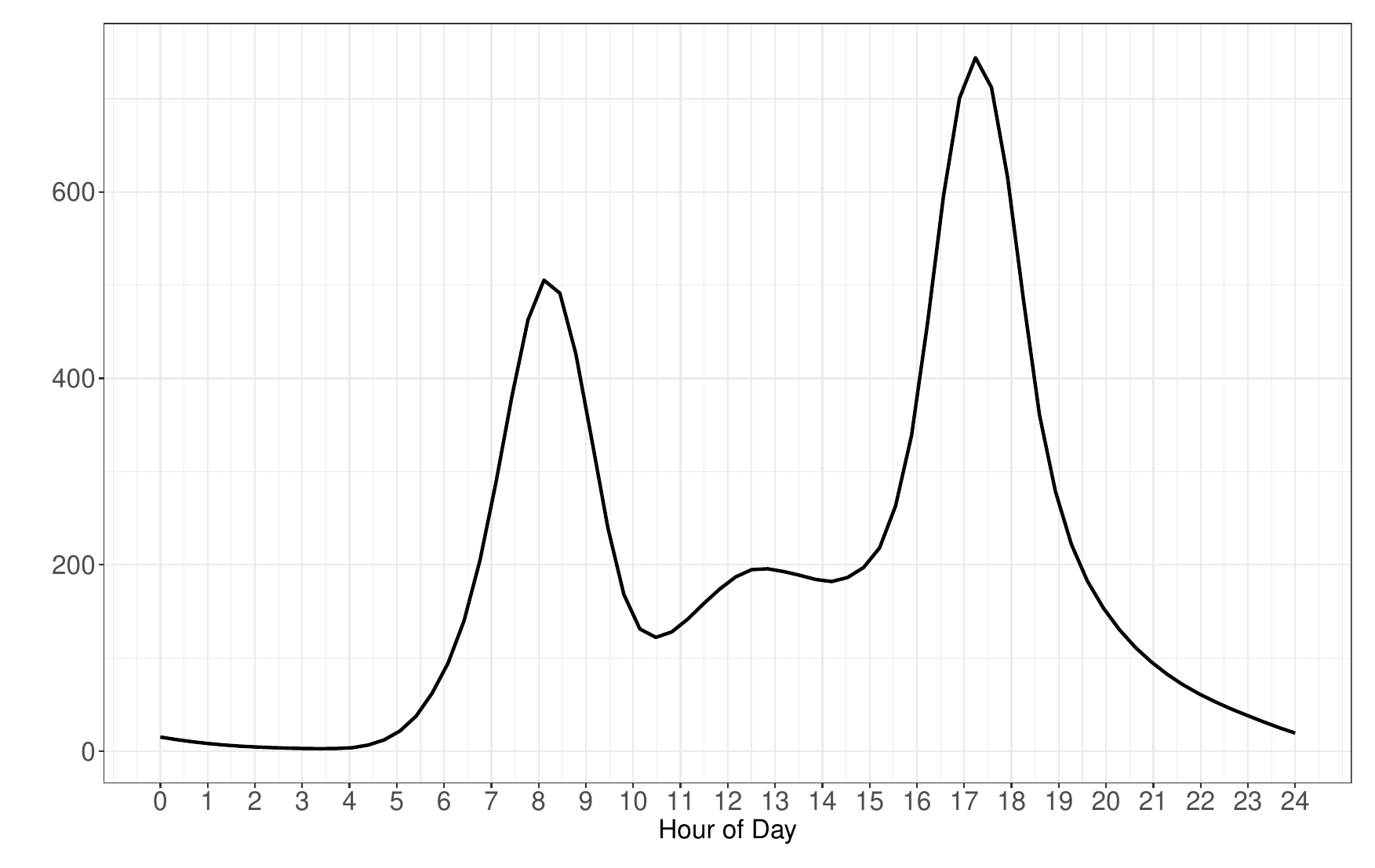}
	\end{subfigure}%
    ~
	\hspace{-10mm}
	\begin{subfigure}[t]{0.5\textwidth}
		\centering
		\includegraphics[width=0.9\textwidth]{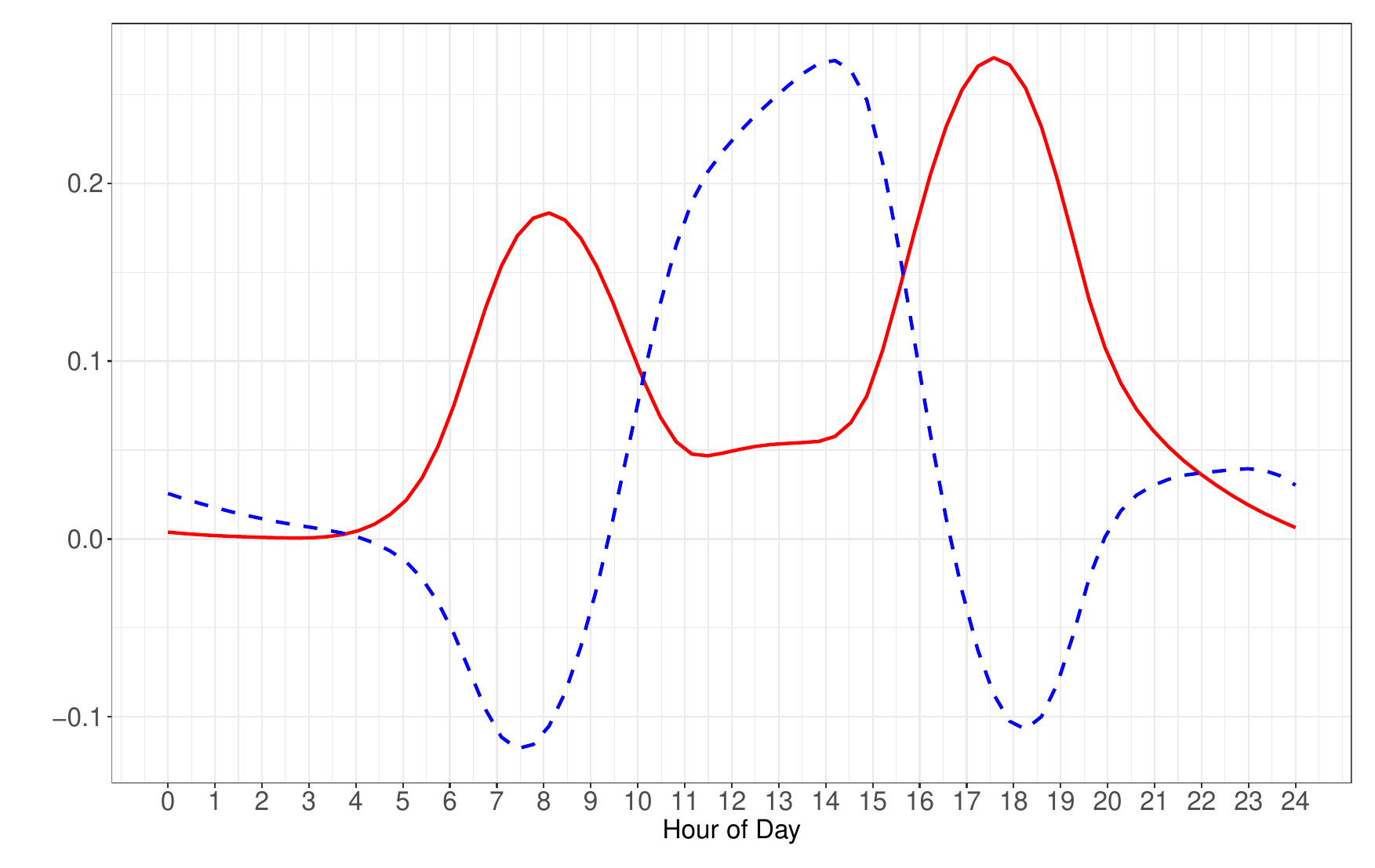}
	\end{subfigure}
	\caption{Sample mean function (left plot) and eigenfunctions for the functional principal component analysis (right plot) of the squared distance trajectories at 20 minute intervals of graph Laplacians of daily Divvy bike trip networks in Chicago. In the right plot, the solid red line corresponds to the first eigenfunction, which explains 90.40\% of variability in the trajectories and the dashed blue line to the second eigenfunction, which explains 7.28\% of the variability. }
	\label{fig: b1}
\end{figure*}

The predominant directions of variation of the daily \F \ variance trajectories around the \F \ mean function are visualized by the first two eigenfunctions in the right plot of Figure \ref{fig: b1}. The first two functional principal component scores explain about 97.7\% of the
variation in the \F \ variance trajectories. The first eigenfunction reflects increased variability around the peaks of the \F \ variance function that is shown in Figure \ref{fig: b1}. The peaks are between between 7am to 10am and 4pm to 7pm,  which reflect morning and late afternoon and early evening peaks of commute, where the deviations from the mean \F  \ variance  function  are seen to be largest. The second eigenfunction  reflects  a contrast between these peaks and the squared deviation from the mean \F \ variance function during the time period 11am to 4pm. 

\begin{figure}
	\centering
	\includegraphics[scale = .6]{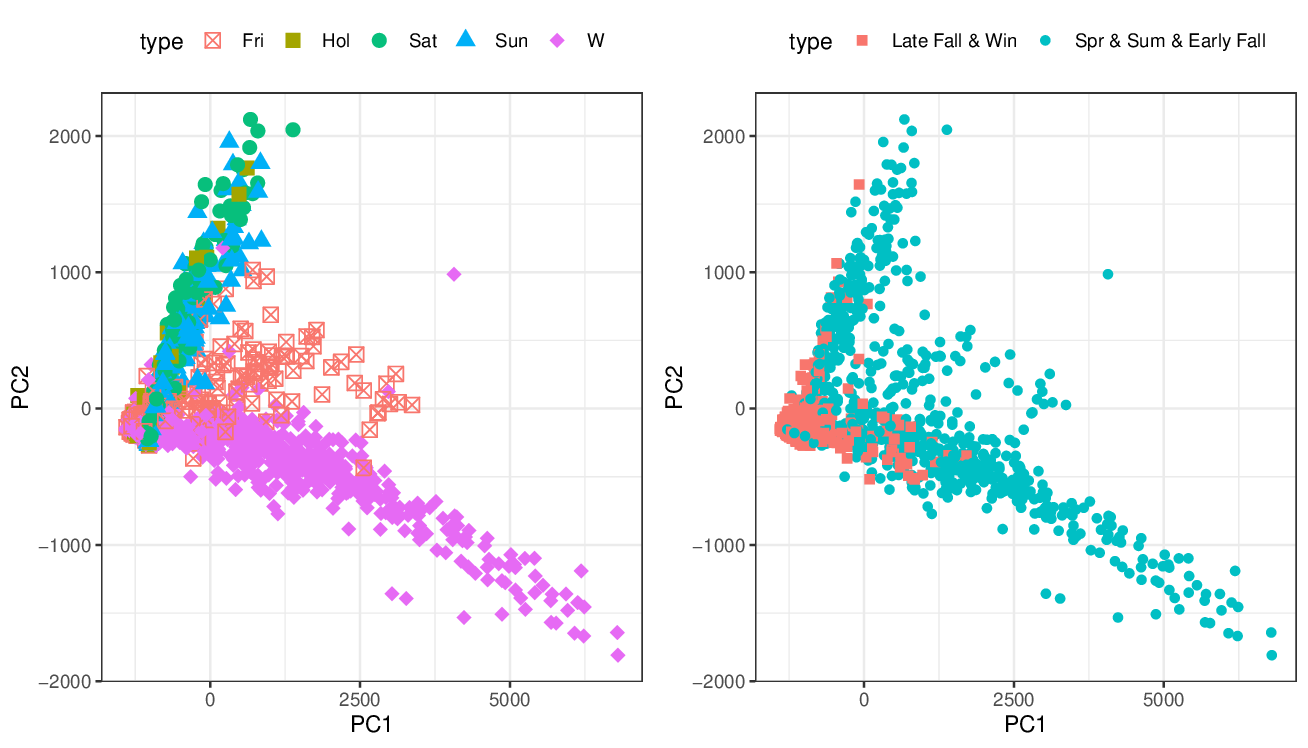}
	\caption{Pairwise plots of the first two FPC scores, distinguished  by day of the week (left plot) and by season (right plot). In the left plot, `W' stands for regular Mondays to Thursdays, `Fri' for Fridays, `Sat' for Saturdays, `Sun' for Sundays and `Hol' for special holidays. In the right plot, `Late Fall \& Win' includes months between November and March and `Spr \& Sum \& Early Fall' includes April to October. } 
	\label{fig: b3}
\end{figure}

Analyzing the FPC scores of the daily \F \ variance trajectories along the first and second eigenfunctions, Figure \ref{fig: b3} reveals several interesting patterns in the daily \F \ variance trajectories. Weekdays and weekends form distinct clusters. 
Holidays show  patterns similar to  weekends. When scrutinizing  second versus first FPC scores, an outlying observation is
found at  August 21, 2017. Researching  the background of this day, we found that  there was a total solar eclipse, which was in peak view over Chicago at 1:18 pm in the afternoon. The \F \ variance trajectory of this particular day is illustrated in Figure \ref{fig: s1} in section \hyperref[supp4]{A.4} of the online supplement. An application of the naive Bayes classifier and the support vector machine on the first two FPC scores of the daily \F \ variance trajectories with  ``weekdays" and ``weekends and holidays" as the binary response using 75\% of the data as training sample and 25\% of the data as test sample gave a low misclassification rate of 6.02\% in both cases. The classification result is illustrated in Figure \ref{fig: s2} in the online supplement. 

We also performed a  FPCA  of these \F \ variance trajectories for weekdays, Fridays and weekends, including special holidays in the same group as weekends, separately for the three cases, with results displayed  in Figures \ref{fig: s3} and \ref{fig: s4} in  the online supplement. Weekdays and weekends, including special holidays, show clear differences in  both mean \F \ variance functions and eigenfunctions. The Friday pattern can be characterized as  ``transition"  from weekdays to weekends.
Seasonal differences can impact bike sharing patterns. In the right plot of Figure \ref{fig: b3}, we display second versus first FPC scores, differentiated according to two broad seasonal groups. Spring, summer and early fall includes months from April to October that exhibit greater variability than the late fall and winter months of November to March, with further illustrations  in Figures \ref{fig: s5} and \ref{fig: s6} in the online supplement.

\subsection{Fertility Data}
The Human Fertility Database provides cohort fertility data for various countries and calendar years. The data are available at \url{www.humanfertility.org} and facilitate the study  of the time evolution and inter-country differences in fertility over a period spanning  more than 30 calendar years \citep[see also][]{mull:17:4}.  
We selected 27 countries with complete fertility records for the time period 1976 to 2009. For each country and year, the age specific total live birth counts correspond to histograms of maternal age with bin size one year. These histograms were smoothed (for which we employed  local least squares smoothing using the Hades package available at \url{https://stat.ucdavis.edu/hades/})  to obtain smooth probability density functions for maternal age, where we consider the age interval $[12,55]$. We thus  obtain samples of time-varying univariate probability distributions, where the subjects are the countries, the time is  calendar years between 1976 and 2009 and the observation at each time point for a specific country is its maternal age distribution, i.e., the distribution of ages when females give birth within the age  interval $[12,55]$ for the specified country and calendar year.  

\begin{figure}[H]
	\centering
	\includegraphics[width=\textwidth]{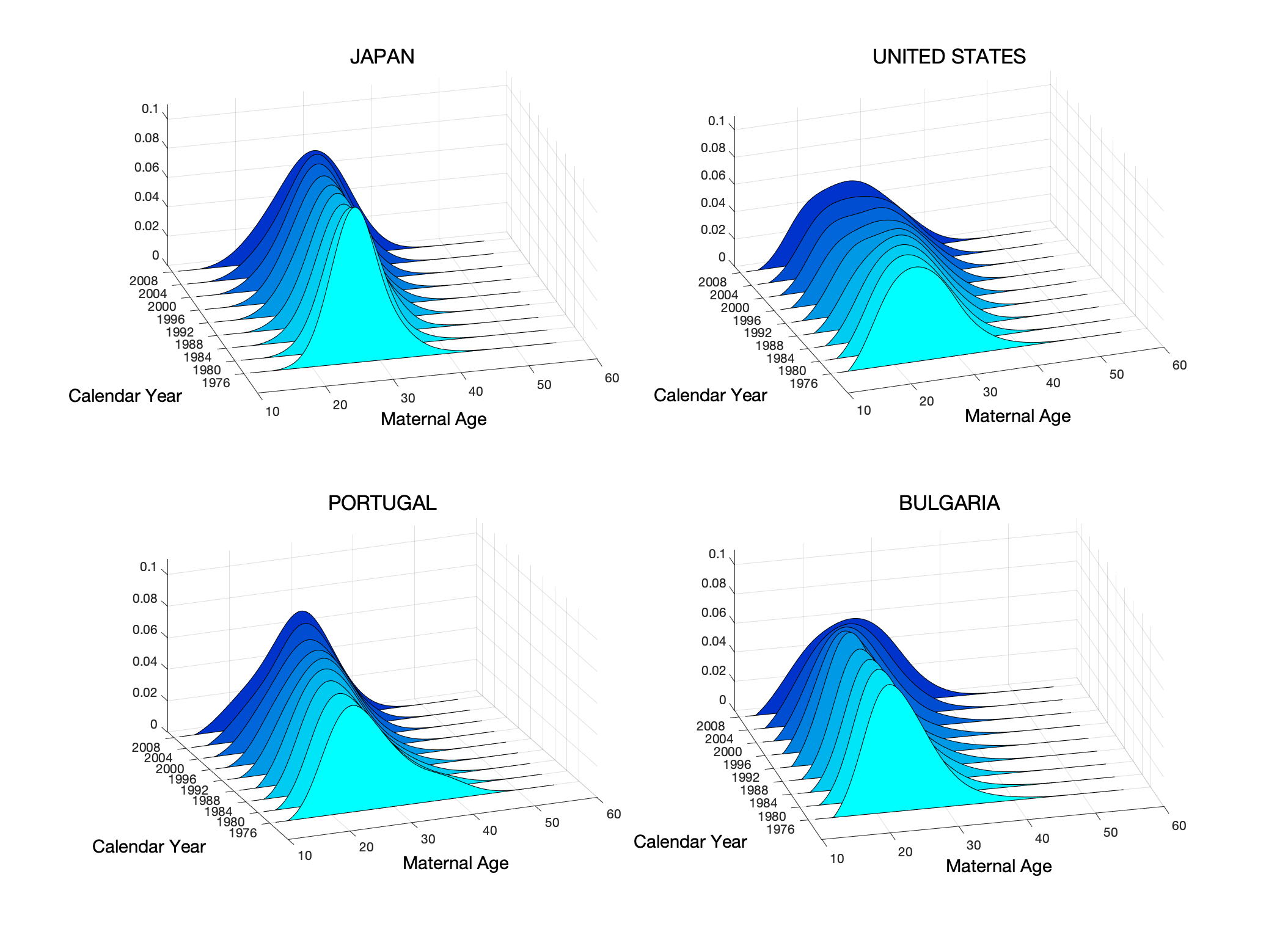}
	\caption{Yearly maternal age distributions represented as density functions for the age interval [12,55] during the time period 1976 to 2009 (selected years). The $x$-axis represents maternal age, the $z$-axis the density functions and the $y$-axis  calendar years.}
	\label{fig: f20}
\end{figure}

Figure \ref{fig: f20} displays  the evolving densities for  Japan, United States, Portugal and Bulgaria in some selected years over a period spanning 56 years from 1959 to 2014. There are clear differences in the maternal age evolution between countries, but the overall trend is that maternal age increases. This is also reflected in the  \F \ mean densities in Figure \ref{fig: f3}, which show a shift in their mode locations towards  higher age over the years. 

 We opt for the 2-Wasserstein metric as the distance between probability distributions, which corresponds to the  $L^2$ distance between their quantile functions, a metric that has proved to be an excellent choice in many applications \cp{bols:03,bigo:18}.  The sample \F \ mean trajectory in a particular year then corresponds to the sample average of the quantile functions of the 27 countries in that year \citep{mull:18:5}, which is represented as the corresponding density function. We then obtained  the \F\ variance trajectory for each country, where its value for a given calendar year corresponds to the squared 2-Wasserstein distance between the maternal age distribution of the country  to the \F \ mean age distribution for the specified calendar year.  
 
 Finally,  we performed a FPCA  on the 27 country specific \F \ variance trajectories. Figure \ref{fig: f1} shows the estimated mean function of the \F\ variance trajectories, which is a scalar function that corresponds to the sample average of the \F \ variance trajectories as a function of calendar year. The predominant directions of variation of the distance trajectories around the \F \ mean trajectory are captured by the first two eigenfunctions, which are depicted in the left plot of Figure \ref{fig: f2}.  The first and second eigencomponent  explain 74.64\% and 17.40\% of the variation in the distance trajectories. The first eigenfunction is increasing until between 1995 to 2000 and then starts to decrease. This shows increasing deviation of the distance trajectories  from the mean \F \   variance trajectory until right before the new millennium after which these deviations tend to decrease. The second eigenfunction reflects  a contrast between the earlier and later parts of the calendar time interval.
 
 \begin{figure}
	\centering
	\includegraphics[scale = .4]{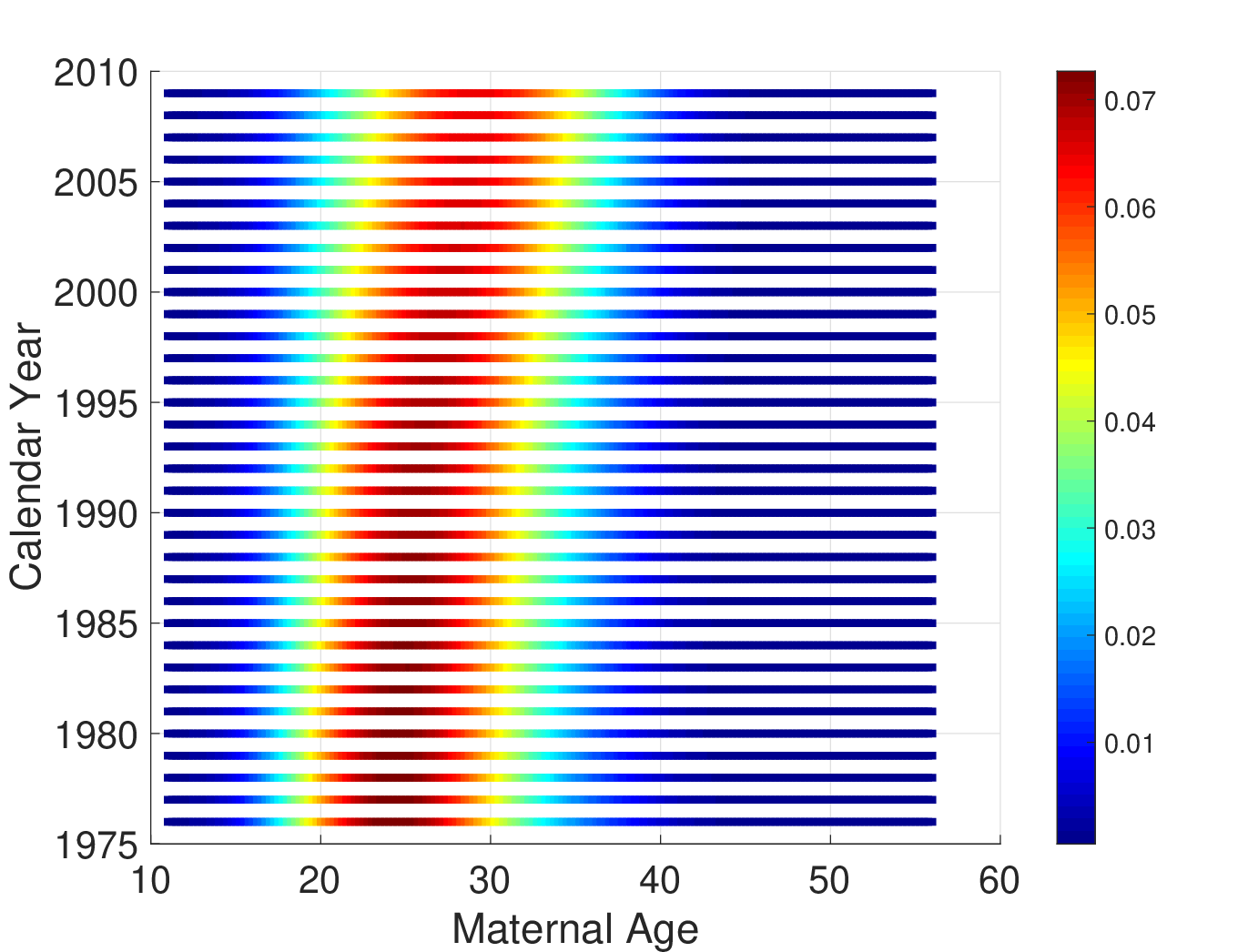}
	\caption{\F \ mean maternal age distributions represented as a heat plot of density functions for the age interval [12,55] during the time period 1976 to 2009. }
	\label{fig: f3}
\end{figure}
 The years between 1995 to 2000 mark the critical period when the changes in the mode of maternal age distributions start to take place as displayed in Figure \ref{fig: f3}. This period of increased activity is also prominent in Figure \ref{fig: f1} which shows that the \F \ variance of maternal age distributions increases in the beginning, reaches a peak  between 1995 and 2000 and then starts to decrease. This might be attributed to increasing numbers of women opting for higher education and participating in the labor force in some of the countries included in the dataset over the time interval where the data have been collected. Another likely factor are advanced birth control measures in the past few decades, which led to changes in the  maternal age distribution early on for some countries, while these changes were delayed for some other countries, leading to increased discrepancies between countries from  1995 to 2000,  which stabilized  later as  countries moved closer to the mean behavior.   
\begin{figure} 
	\centering
	\includegraphics[width=0.4\textwidth]{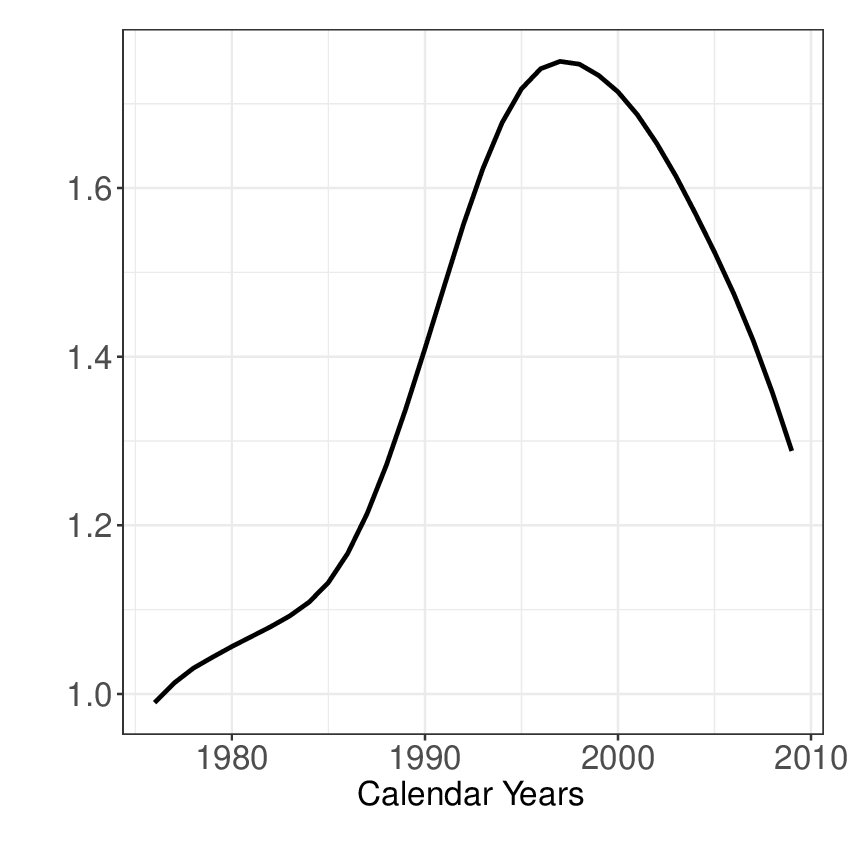}
	\caption{Sample mean  function 
	of the \F \ variance trajectories of the time courses of country specific maternal age distributions.}  
	\label{fig: f1}
\end{figure}

\bco

\begin{figure}
	\centering
	\includegraphics[scale = .65]{fert4}
	\caption{Fitted smoothed covariance surfaces of the \F \ variance trajectories of country specific maternal age distribution time courses from their \F \ mean trajectory.}
	\label{fig: f21}
\end{figure}
 
 \fi

\begin{figure*}[t!]
	\centering
	\hspace{-10mm}
	\begin{subfigure}[t]{0.5\textwidth}
		\centering
		\includegraphics[scale=0.5]{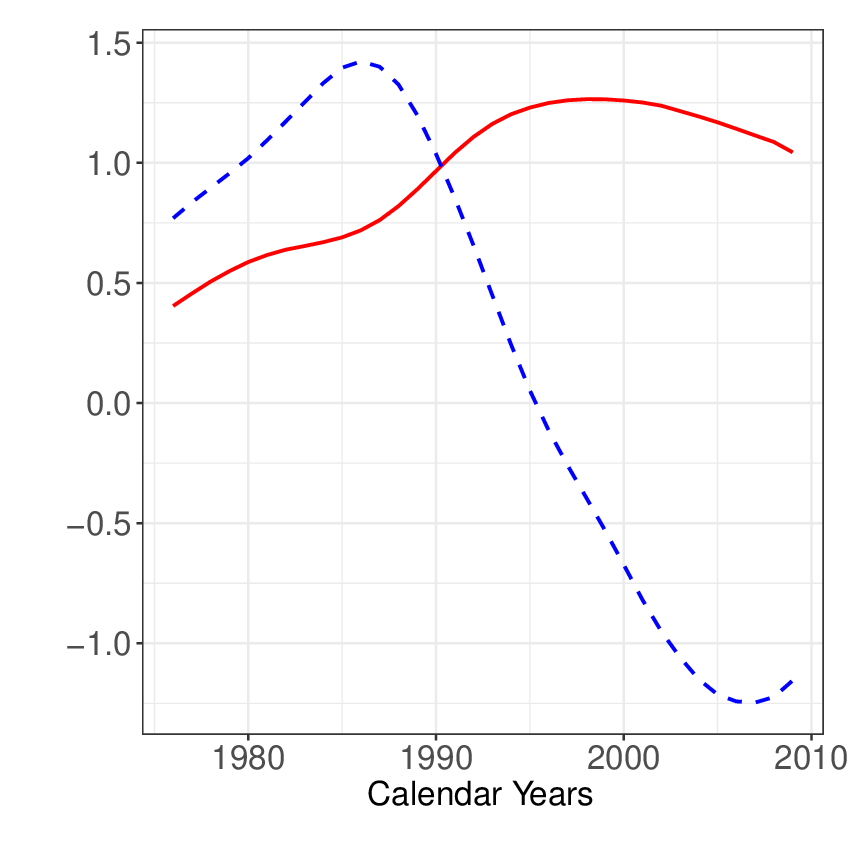}
	\end{subfigure}%
	~
	\begin{subfigure}[t]{0.5\textwidth}
		\centering
		\includegraphics[scale=0.5]{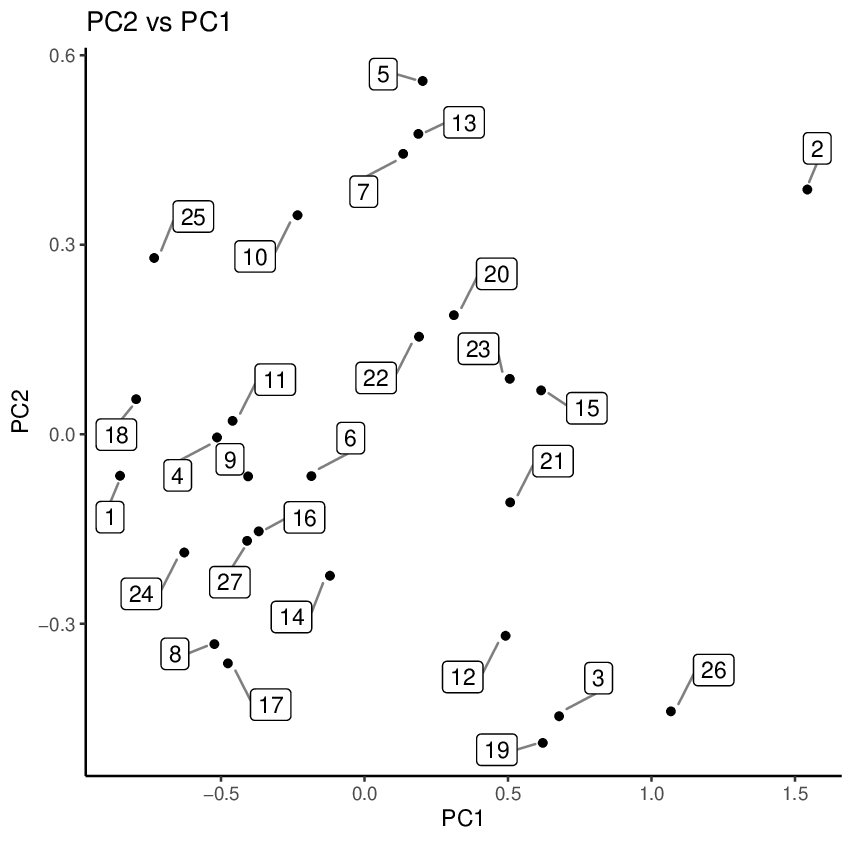}
	\end{subfigure}
	\caption{Left panel: First (solid red)  and second (blue dashed) eigenfunction  of the country specific \F \ variance trajectories, which  explain 74.64\% and 17.40\% of variability,  respectively. Right panel: Second versus first FPC scores of the country specific \F \ variance trajectories.  Here   1  denotes Austria; 
		2  Bulgaria; 
		3  Belarus; 
		4  Canada; 
		5  Czech Republic; 
		6  Estoina; 
		7  Finland; 
		8  Germany; 
		9  France; 
		10  Hungary; 
		11  Israel; 
		12  Italy; 
		13  Japan; 
		14  Lithuania; 
		15  Netherlands; 
		16  Norway; 
		17  Poland; 
		18  Portugal; 
		19  Russia; 
		20  Slovakia; 
		21  Spain; 
		22  Sweden; 
		23  Switzerland; 
		24  United Kingdom; 
		25  Taiwan; 
		26  Ukraine; 
		27  United States.}
	\label{fig: f2}
\end{figure*}

The FPC scores of the trajectories along predominant directions of variation not only reveal interesting patterns but also aid in identifying extremes, which is tricky for the case of time-varying probability distributions. Plotting second versus first FPC scores (right plot of Figure \ref{fig: f2}), one finds that 
Bulgaria shows large deviations from the \F \ mean variance function trajectory along both first and second eigenfunctions.  The Czech Republic has the highest second FPC, indicating a  large contrast in the distance from the \F \ mean trajectory between early and later years. Portugal and Austria have negative first FPCs and small second FPCs,  which means that their deviation from the \F \ mean trajectory  is less than the average deviation over the calendar time interval.

\subsection{Empirical Dynamics}

When using FDA for analyzing real valued longitudinal data, a common assumption is that the data are generated by an underlying smooth and square integrable stochastic process. The derivatives of the trajectories of such processes are  often the key to understanding the underlying dynamics  \cp{rams:00, mas:09, rams:07}. Empirical dynamics 
\citep{mull:10:2} is a systematic approach to assess the underlying dynamics of longitudinally observed data. It is based on the decomposition 
\begin{equation}
\label{dyn}
Y^{(1)}(t)-\mu^{(1)}(t)= \beta(t) \{Y(t)-\mu(t)\}+ Z(t),
\end{equation}
where $Y(\cdot)$ is the underlying stochastic process, $\mu(t)=E(Y(t))$, $Y^{(1)}(\cdot)$ and $\mu^{(1)}(\cdot)$ are the derivatives of $Y(\cdot)$ and $\mu(\cdot)$, $\beta(\cdot)$ is a smooth time varying linear coefficient function and $Z(\cdot)$ is a random drift process. For Gaussian processes this decomposition can be easily derived and  then $Z$ and $Y$ are independent with  $E(Z(t))=0$, leading to 
\begin{equation}
\label{dyn2}
E\left[Y^{(1)}(t)-\mu^{(1)}(t) \Big| Y(t) \right]= \beta(t) \{Y(t)-\mu(t)\}.
\end{equation} 

When one has non-Gaussian processes, such as the  distance processes $V^\ast$ (\ref{or}),  the above decompositions provide useful approximations and can be interpreted in a least squares sense, where the 
coefficient of determination
\begin{equation*}
R^2(t)= 1-\frac{\var\{Z(t)\}}{\var\{Y^{(1)}(t)\}}
\end{equation*} indicates the fraction of variance explained by the empirical dynamics approximation.
On pertinent subdomains where $R^2(t)$ is relatively large,  the dynamics of the trajectories are determined to a greater extent by the linear model in \eqref{dyn2}, while otherwise the random drift process $Z$ becomes the driving factor instead of the dynamic equation. The time-varying function $\beta(t)$  summarizes the characteristics of the dynamics of the underlying process.  If $\beta(t) <0$, one observes {\it centripetality} or {\it dynamic regression to the mean},  i.e., a trajectory which is away from the mean function tends to move closer toward the mean function as time progresses. If on the other hand $\beta(t)>0$, one has {\it centrifugality} or {\it dynamic explosive behavior},  since deviations from the mean at time $t$ tend to increase beyond $t$.

Noting that it is infeasible to define derivatives for trajectories of random objects since the notion of derivative for metric space data is not defined, we apply empirical dynamics instead to the real-valued   \F \ variance trajectories $V^\ast$ (\ref{or}), 
	\begin{align}
	\label{dyn3}
      \frac{d}{dt} (\sv-E\sv) =  \beta(t) (\sv-E\sv)+Z(t).
	\end{align}
We note that centripetality and centrifugality relate to the mean trajectory $E\sv$, i.e., the mean \F \ variance function;  
we quantify the dynamics of the variation relative to this  function.  For estimation of the time varying coefficient $\beta(\cdot)$ and $R^2(t)$, we adopt a  plug-in estimation procedure  \citep{mull:09:1,mull:10:6}.

We implemented the empirical dynamics model \eqref{dyn3} for the sample of time varying maternal age distributions 
with  the R package \textit{fdapace}.  Figure \ref{fig: fnew1} illustrates the estimated slope function $\hat{\beta}(t)$ and coefficient of determination function $\hat{R}^2(t)$, where  the latter indicates that the  dynamics of the maternal age distribution trajectories can be explained by the first order differential equation in the period before 1995 and after 2000, but not in between.  The slope function is positive before 2000 and negative after 2000, indicating centrifugality of the maternal age distributions before 2000 and centripetality after 2000, so that in  more recent years there is a tendency for the fertility distributions to become more similar in the sense that their variation around the mean distance function is decreasing over calendar time, while the mean distance function itself is also decreasing as seen in  Figure \ref{fig: f1}.

\begin{figure*}[t!]
	\centering
	\begin{subfigure}[t]{0.5\textwidth}
		\centering
		\includegraphics[width=\textwidth]{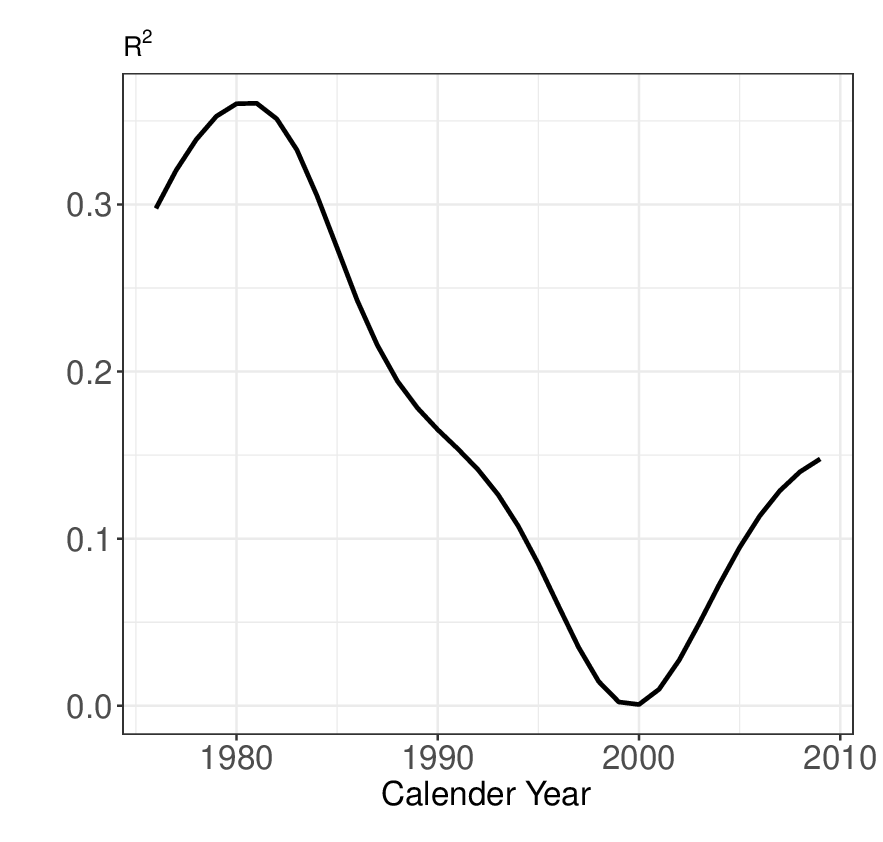}
	\end{subfigure}%
	~ 
	\begin{subfigure}[t]{0.5\textwidth}
		\centering
		\includegraphics[width=\textwidth]{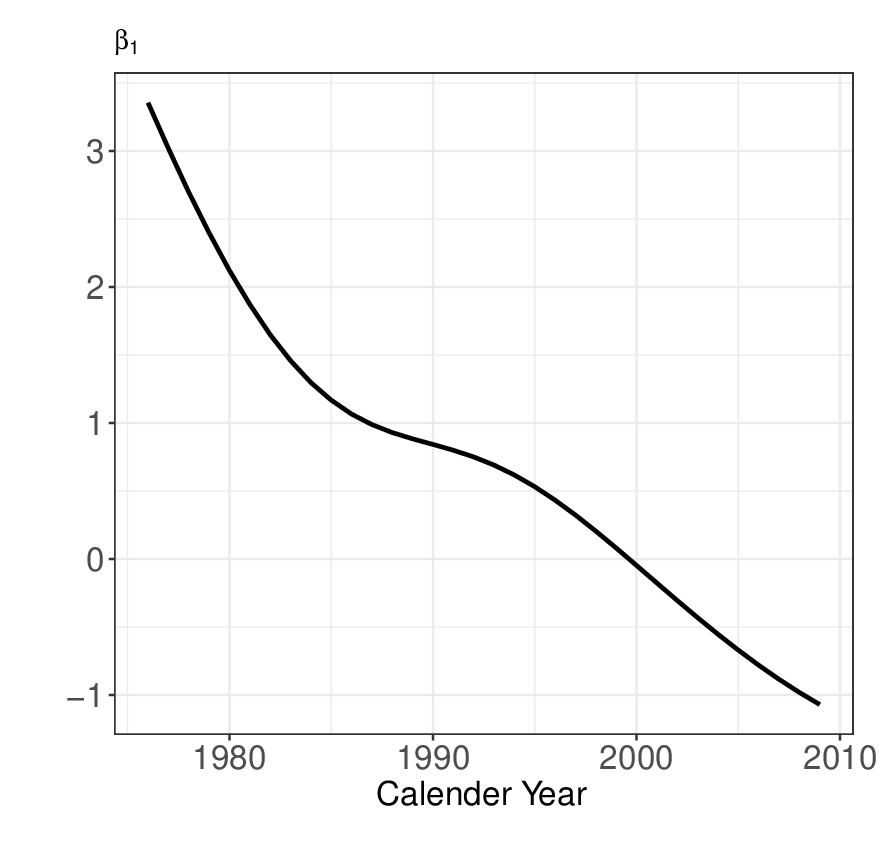}
	\end{subfigure}
	\caption{Smooth estimate of the coefficient of determination $R^2(t)$ (left) and the varying coefficient function $\beta(t)$(right),  capturing empirical dynamics of the distance trajectories of the fertility data.}
\label{fig: fnew1}
\end{figure*}

\subsection{Z\"urich Longitudinal Growth Data}

Statistical shape analysis is an emerging field of data analysis that constitutes measuring, describing and comparing random shapes \citep{dryd:16,kend:84,kend:89,le:93,patr:15,mard:05}. Often, shapes are determined using a finite set of coordinate points, known as landmark points. As an example of shapes evolving over time, we consider the growth modalities that are obtained from a longitudinal study on human growth and development \citep{mull:84:2}. This study included growth data for $232$ Swiss children and was conducted in the University Children’s Hospital in Z\"urich between 1954 and 1978. For each child, we consider trajectories $X_i(t)$, where $X_i(t)$ is a $4 \times 2$ matrix which represents four landmarks in two dimensions. The four landmarks are the foot, which is set to be the point $(0,0)$, the top of the head which has coordinate $(0,\text{standing height})$, the shoulder which is set to have the coordinate $(\text{bi-humeral diameter}, 0.8 \ \text{standing height})$ and the hip which is set at the point $(\text{bi-iliacal diameter},\text{leg length})$.  These shape trajectories are observed for the age interval $t \in [0.5, 20]$ years. 

The trajectories $X_i(t)$ described above correspond to two-dimensional landmarks that take values in the planar shape space, which 
can be viewed as configurations on the complex plane. In this space, we adopt the full Procrustes metric, which  has been quite successful in the analysis of planar shapes in a wide variety of applications \citep{dryd:16}. Given centered complex configurations $y=(y_1,y_2, \dots ,y_k)$ and $z=(z_1,z_2,\dots,z_k)$, both in $\mathbb{C}^k$, with $y^* 1_k=0=z^* 1_k$, the full Procrustes distance between $y$ and $z$ is defined as
\begin{equation*}
d_{\mathcal{P}}(y,z) = \left \lbrace 1- \frac{y^*zz^*y}{z^*zy^*y} \right \rbrace^{1/2}.
\end{equation*}

Accordingly, we first obtained  the sample \F \ mean trajectory of the random shape trajectories under the full Procrustes metric as
\begin{equation*}
\hat{\mu}(t) = \argmin_{\omega \in \Omega} \frac{1}{n} \sum_{i=1}^n d^2_{\mathcal{P}}(X_i(t),\omega),
\end{equation*}
and then analyzed the subject specific \F \ variance trajectories $d^2(X_i(t),\hat{\mu}(t)), t \in [0.5,20], i=1,2, \dots,232$ using the tools developed in this paper. The sample \F \ mean and the computation of the full Procrustes distance were carried out using the R package \texttt{shapes} \citep{dryd:12}. {For details on existence and uniqueness of \F \ means in shape spaces see \cite{le:95,le:98}}.

\begin{figure} 
	\centering
	\includegraphics[width=0.5\textwidth]{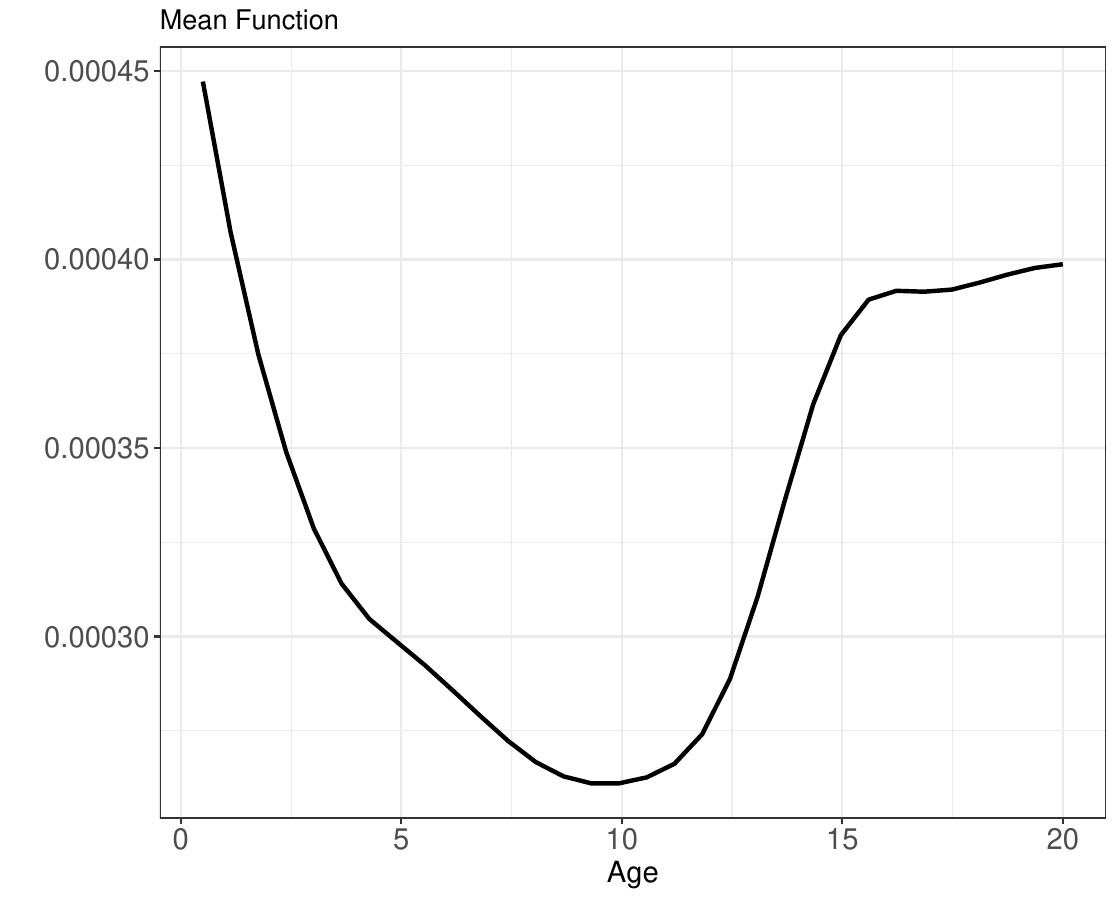}
	\caption{Sample mean  function 
	of the \F \ variance trajectories of the time courses of shape data representing growth modalities.}  
	\label{fig: s1}
\end{figure}

\begin{figure*}[t!]
	\centering
	\hspace{-10mm}
	\begin{subfigure}[t]{0.5\textwidth}
		\centering
		\includegraphics[width=\textwidth]{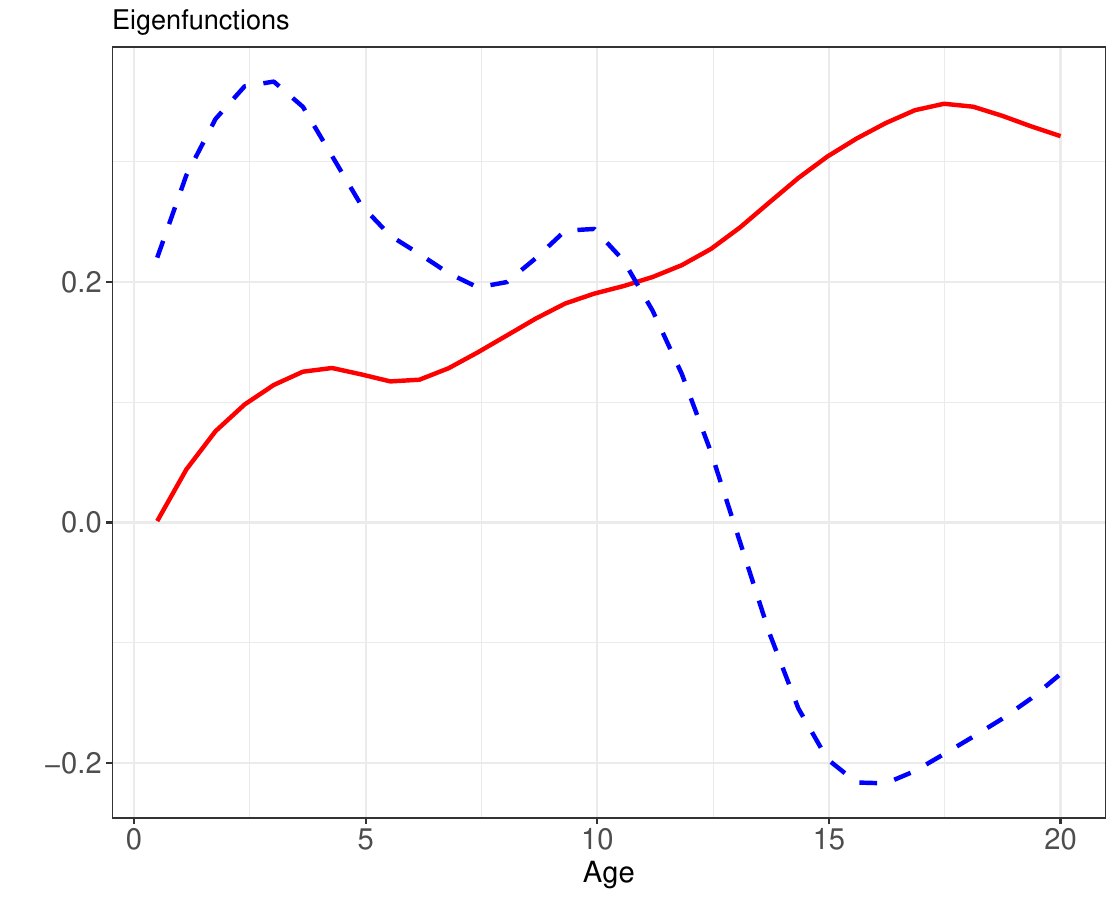}
	\end{subfigure}%
	~
	\begin{subfigure}[t]{0.5\textwidth}
		\centering
		\includegraphics[width=\textwidth]{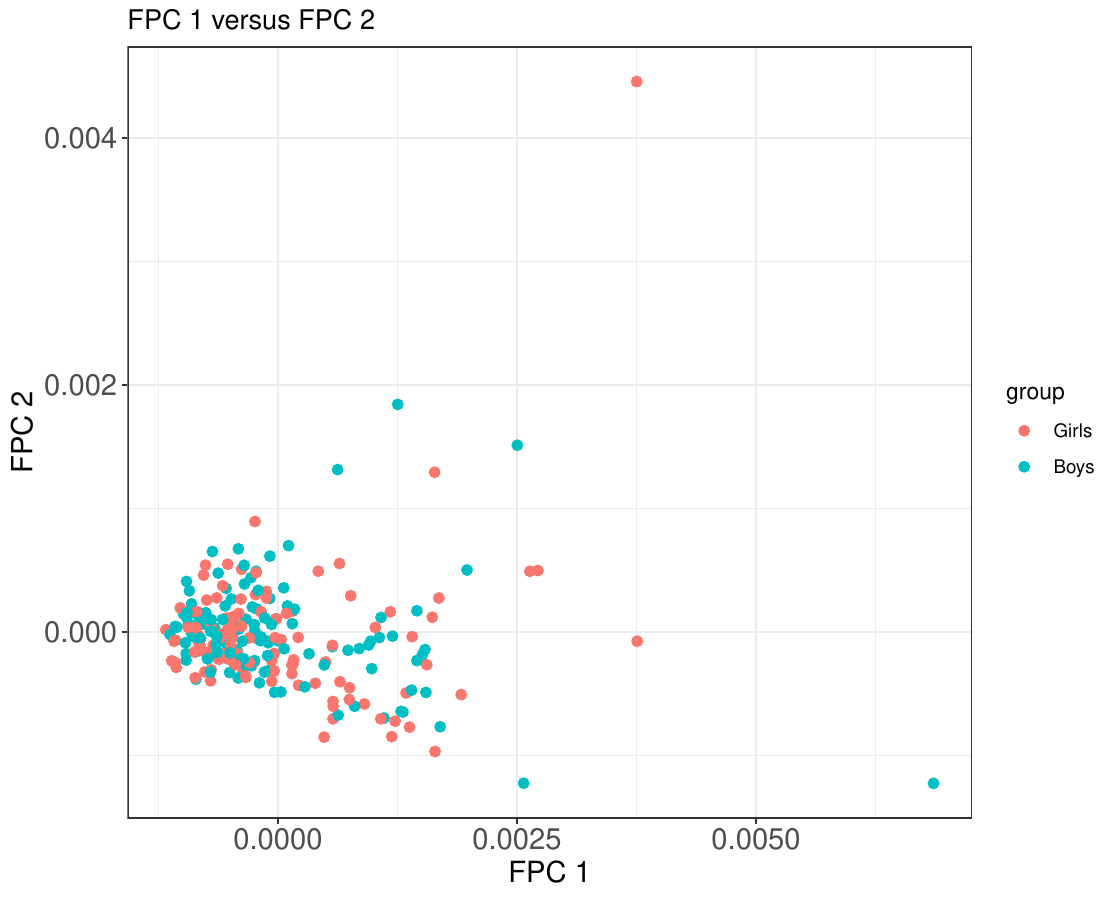}
	\end{subfigure}
	\caption{Left panel: First (solid red)  and second (blue dashed) eigenfunction  of the child specific \F \ variance trajectories, which  explain 76.07\% and 18.55\% of variability,  respectively. Right panel: Second versus first FPC scores of the child specific \F \ variance trajectories. }
	\label{fig: s2}
\end{figure*}

Figure \ref{fig: s1} shows the time evolution of population \F \ variance, which captures the overall variability trends of the shape trajectories around the \F \ mean trajectory. The periods of early childhood, that is from infancy to about 5 years of age, and the period starting at adolescence until adulthood, i.e. between 13 to 20 years, exhibit greater variability around the \F \ mean shapes. Figure \ref{fig: s2} show the first two dominant eigenfunctions, which explain 76.07\% and 18.55\% of the variability in the subject specific \F \ variance trajectories. The corresponding plot of the second  FPC score versus the first  FPC score shows that there is no systematic  difference between boys and girls  when comparing the shape trajectories under the full Procrustes metric. 

{To illustrate the growth shape patterns, we select four boys as  indicated   in the left panel of Figure \ref{fig: s3} with the first FPC scores ranging from positive to zero to negative.  A negative first FPC score of the subject specific \F \ variance trajectories loads negatively on the first eigenfunction illustrated in Figure \ref{fig: s2}, which highlights growth patterns that are closer to the sample \F \ mean shape trajectory than average. A first FPC score closer to zero is associated with subjects that exhibit close to average variability around the sample \F \ mean shape over time, while  a positive first FPC score suggests greater than average variability around the mean \F \ mean trajectory. The four boys ``1",``2", ``3" and ``4" illustrated in Figure \ref{fig: s3} fit these prototypes. While ``1" is closest to the sample \F \ mean shape trajectory, ``2" represents a child who shows typical deviation around the \F \ mean shape over the years, whereas ``3" exhibits larger than typical variation over the years and ``4" stands out from the rest with an extremely high first FPC score. This pattern suggests that the first FPC captures differences in overall size  
of the children.} 

{Figure \ref{fig: s4} illustrates variability with respect to the second eigenfunction, which   corresponds to a contrast  between early and later years 
(Figure \ref{fig: s2}).   
for three selected boys highlighted in the left panel, all of whom have small first FPC scores, and whose second FPC scores vary from negative to zero to positive. For these boys, `5" tends to remain thinner during the teenage years as compared to``6" whose hip diameter gets wider during the teenage years, while ``7" has the biggest growth in hip diameter among the three. }

\begin{figure*}[t!]
	\centering
	\hspace{-10mm}
	\begin{subfigure}[t]{0.3\textwidth}
		\centering
		\includegraphics[width=\textwidth]{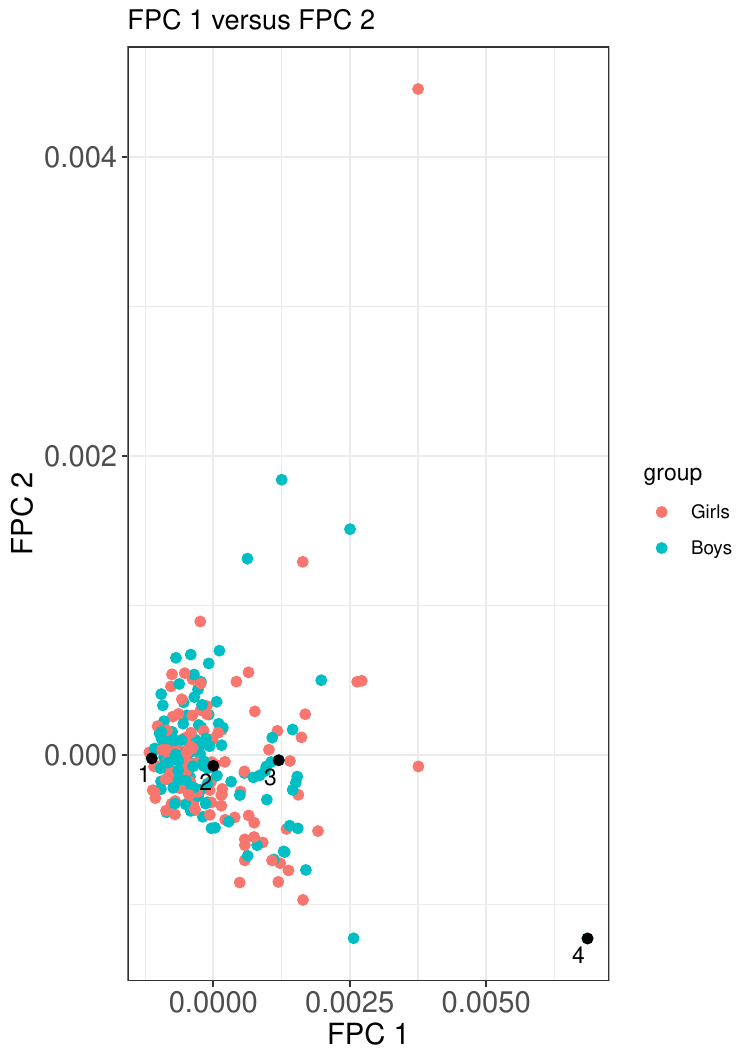}
	\end{subfigure}%
	~
	\begin{subfigure}[t]{0.7\textwidth}
		\centering
		\includegraphics[width=\textwidth]{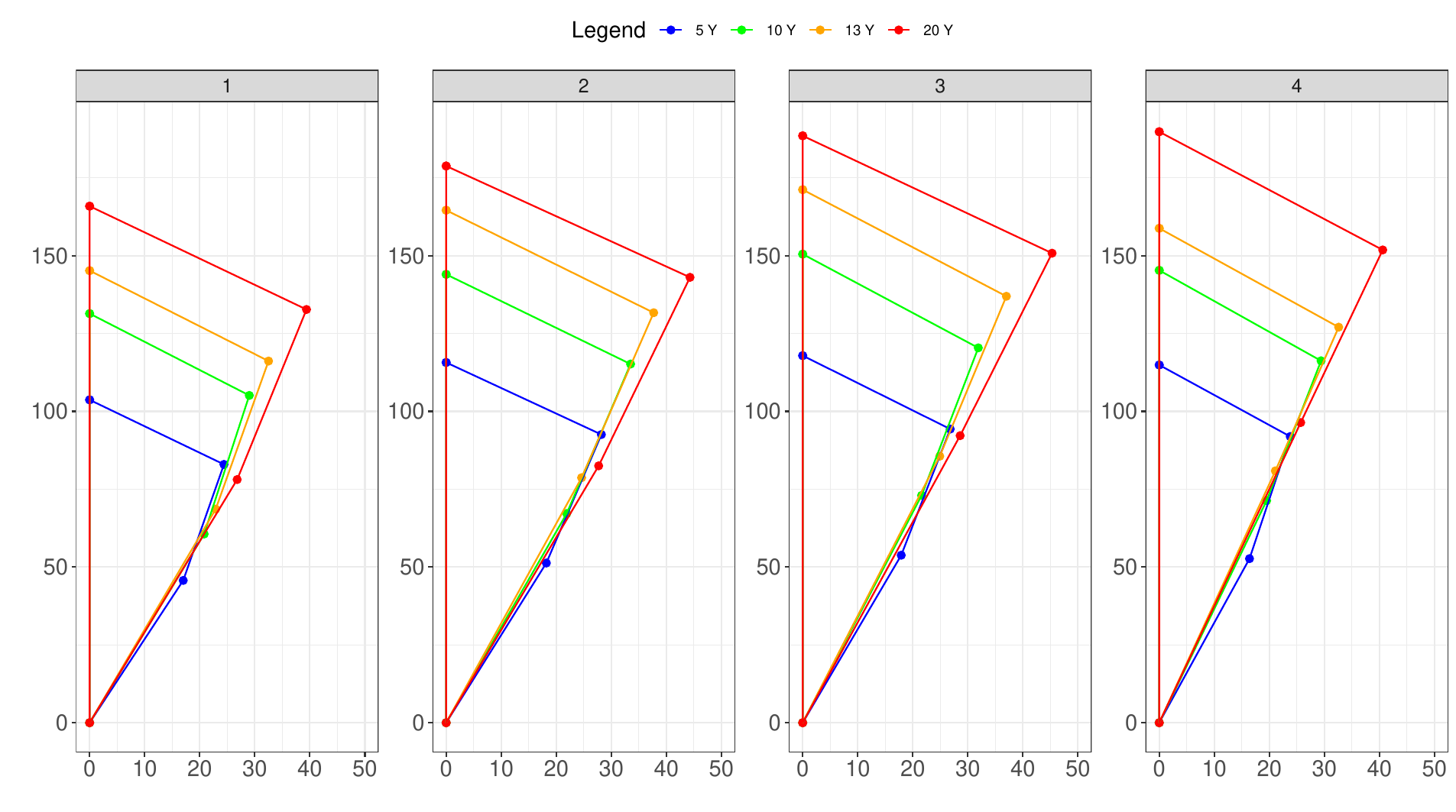}
	\end{subfigure}
	\caption{Left panel: Four selected children (black dots).  Right panel: The shape trajectories $X_i(t), t=5,10,13,20$ for the four  selected children in the left panel. }
	\label{fig: s3}
\end{figure*}

\begin{figure*}[t!]
	\centering
	\hspace{-10mm}
	\begin{subfigure}[t]{0.3\textwidth}
		\centering
		\includegraphics[width=\textwidth]{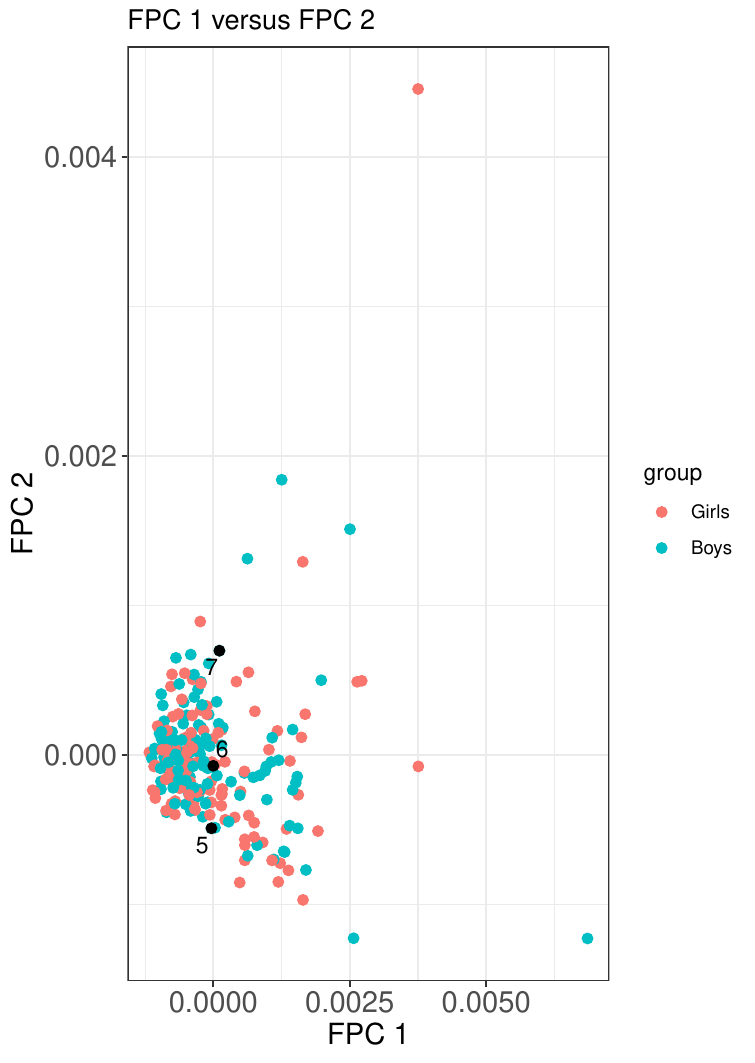}
	\end{subfigure}%
	~
	\begin{subfigure}[t]{0.7\textwidth}
		\centering
		\includegraphics[width=\textwidth]{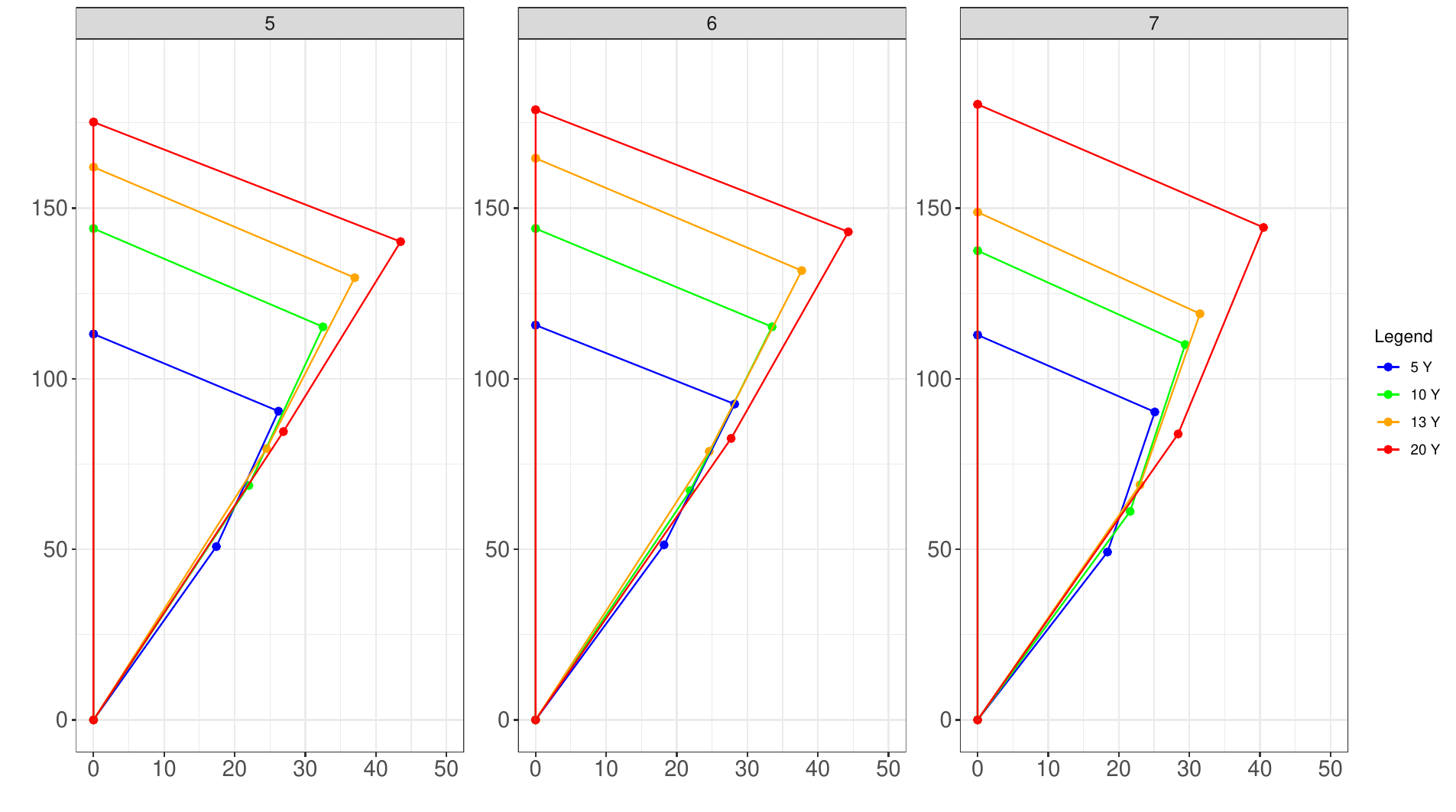}
	\end{subfigure}
	\caption{Left panel: Three selected children (black dots).  Right panel: The shape trajectories $X_i(t), t=5,10,13,20$ for the three selected children in the left panel. }
	\label{fig: s4}
\end{figure*}

\bco
\begin{figure*}[t!]
	\centering
	\hspace{-10mm}
	\begin{subfigure}[t]{0.3\textwidth}
		\centering
		\includegraphics[width=\textwidth]{scores_annotated2}
	\end{subfigure}%
	~
	\begin{subfigure}[t]{0.7\textwidth}
		\centering
		\includegraphics[width=\textwidth]{MOV2}
	\end{subfigure}
	\caption{Left panel: Three selected children. Right panel: The shape trajectories $X_i(t), t=5,10,13,20$ for the three selected children in the left panel. }
	\label{fig: s5}
\end{figure*}
\fi

We also implemented empirical dynamics for these shape data. 
Figure \ref{fig: s6} illustrates the slope function $\hat{\beta}(t)$ and the coefficient of determination $\hat{R}^2(t)$. The latter shows that dynamics in the distance trajectories between the mean shape trajectory and the subject specific shape trajectories can be explained to a large extent by a first order differential equation for the age range 5 to 15 years.  The slope function $\hat{\beta}(t)$ is positive until 17 years of age and negative thereafter, indicating centrifugality of the distance trajectories between infancy and late teenage years, where children's shapes diverge, and centripetality near adulthood.                                                                                                                                     

\begin{figure}[t!]
	\centering
	\includegraphics[width=0.8\textwidth]{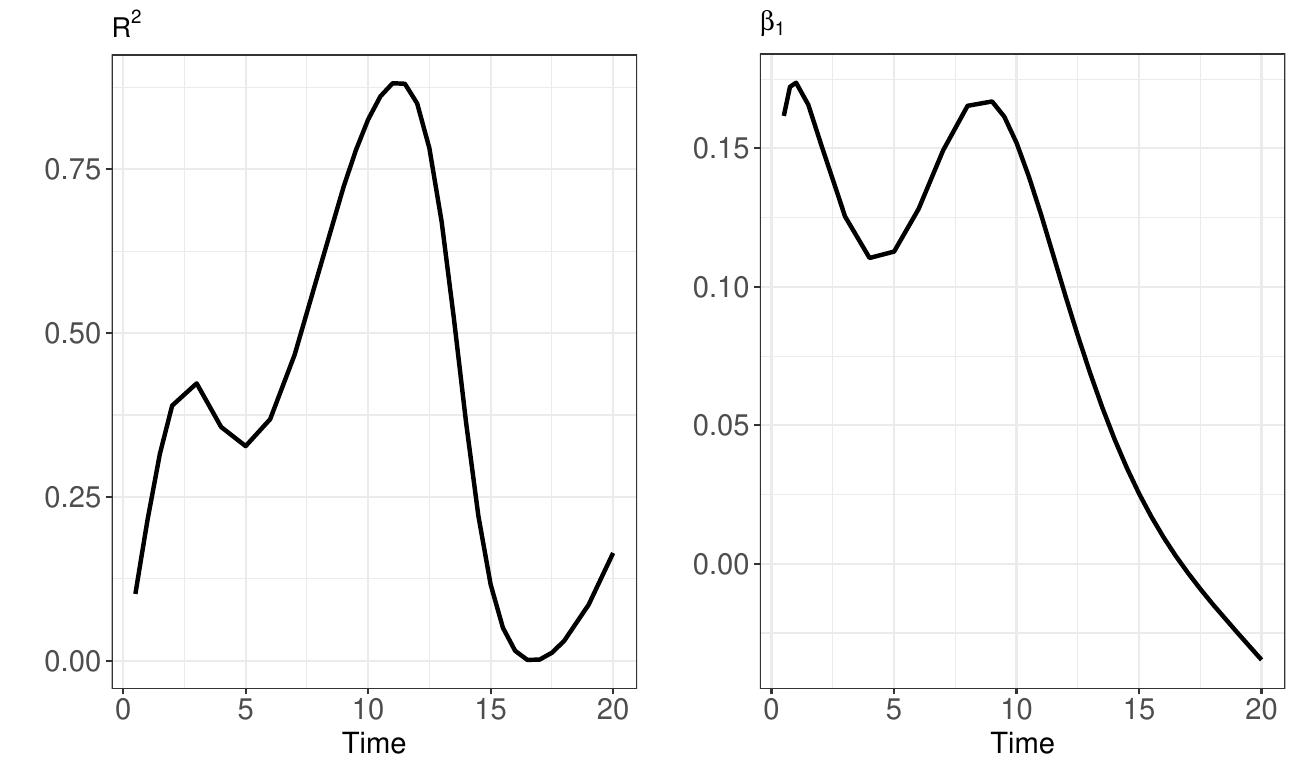}
	\caption{Smooth estimate of the coefficient of determination $R^2(t)$ (left) and the varying coefficient function $\beta(t)$(right),  capturing empirical dynamics of the distance trajectories of the longitudinal growth shape data.}
	\label{fig: s6}
\end{figure}

\section{SIMULATIONS}
\label{sec: simu}

 We illustrate our methods by  simulations using samples of time-varying networks with 20 nodes, inspired by many real world networks that  exhibit community structure, where communities are groups of nodes in a network that show increased within group connectivity and decreased between group connectivity. Existence of community structure is for example  prevalent in traffic networks, particularly the bike networks that we study in our data applications, and also brain networks, social networks and many other areas where networks arise.  We generated the time varying networks as follows.

	{\bf Step 1.} Three groups of time-varying networks with 20 nodes differing  in the community membership of the nodes were generated.  Indexing the nodes of the networks by $1,2,\dots,20$ and the  communities by $C_1,C_2,C_3,C_4$ and $C_5$, the   community membership composition of the nodes for the three groups of networks  was  as follows,
	
	Group 1: Five communities,  $C_1=\{1,2,3,4\}$, $C_2=\{5,6,7,8\}$, $C_3=\{9,10,11,12\}$, $C_4=\{13,14,15,16\}$ and $C_5=\{ 17, 18,19, 20\}$.
	
	Group 2: Four communities,  $C_1=\{1,2,3,4,5,6,7,8\}$, $C_2=\{9,10,11,12\}$, $C_3=\{13,14,15,16\}$ and $C_4=\{ 17, 18,19, 20\}$.
	
	Group 3:  Three communities,  $C_1=\{1,2,3,4\}$, $C_2=\{5,6,7,8\}$ and \\ $C_5=\{9,10,11,12,13,14,15,16,17,18,19,20\}$.
	
	We let the community memberships of the nodes stay fixed in time, while  the edge connectivity strengths $W_{jj'}(t)$ between the communities change with time. The time-varying connectivity weights $W_{jj'}(t)$ between communities $C_j$ and $C_j'$, $j,j' \in \{1,2,3,4,5\}$, that we used when generating the random networks are illustrated in   Figure \ref{fig: sbm1}.   The intra-community connection strengths are higher than the inter-community strengths over the entire time interval. Such dynamic connectivity patterns are  encountered in brain networks \cp{calh:14},    where densely connected brain regions form communities or hubs and inter-hub connectivity often exhibit changing patterns with age.

	{\bf Step 2.}  The network adjacency matrices $A_i(t)$ are generated as:
	
	\begin{equation*}
	(A_i(t))_{k,l} = W_{j,j^{'}}(t) \left \lbrace  \frac{1+\text{sin}(\pi (t+U_{i,kl})V_{i,kl})}{D} \right \rbrace, \ t \in [0,1],
	\end{equation*}
	where $C_j$ is the community membership of node $k$ and $C_j^{'}$ is the community membership of node $l$, $W_{j,j^{'}}(t)$ is the edge connectivity strength between nodes in communities $C_j$ and $C_j'$, $U_{i,kl}$ follows $U(0,1)$ and $V_{i,kl}$ is $1$ if $j=j'$ and sampled uniformly from $\{5,6,\dots,15\}$ if $j \neq j'$. If $j=j'$, we set $D=2$, otherwise $D=4$. Here $U_{i,kl}$ and $V_{i,kl}$ determine  random phase and frequency shifts of the sine function which regulate at what times and how often the edge weights are zero. As $V_{i,kl}$ increases, so does the frequency of the times within $[0,1]$ at which the edge weight is  zero. 	
	The trajectories are represented as graph Laplacians 
	\begin{equation*}
	X_i(t)=D_i(t)-A_i(t), 
	\end{equation*}
	where $D_i(t)$ is a diagonal matrix whose diagonal elements are equal to the sum of the corresponding row elements in $A_i(t)$. Adopting the Frobenius metric in the space of graph Laplacians, the \F \ mean network at time $t$ is  the pointwise average of the graph Laplacians at time $t$, and we obtain the Frobenius distance trajectories of the individual subjects from the \F \ mean trajectory. 

{\bf Step 3.} We carry out FPCA of the distance trajectories generated in Step 2.

The results are shown in Figure \ref{fig: f17}. The proposed method is seen to perform  well in recovering the groups in the scatter plot of the second versus first  functional principal component.  Groups 1 and 2 are found to have closer cluster centers than groups 1 and 3. This is explained by the fact that group 2 is obtained from group 1 by merging $C_1$ and $C_2$ in group 1, which show more similarities than when merging $C_3$, $C_4$ and $C_5$ in group 1 to form group 3.

\begin{figure}
	\centering
	\includegraphics[width=.75\textwidth]{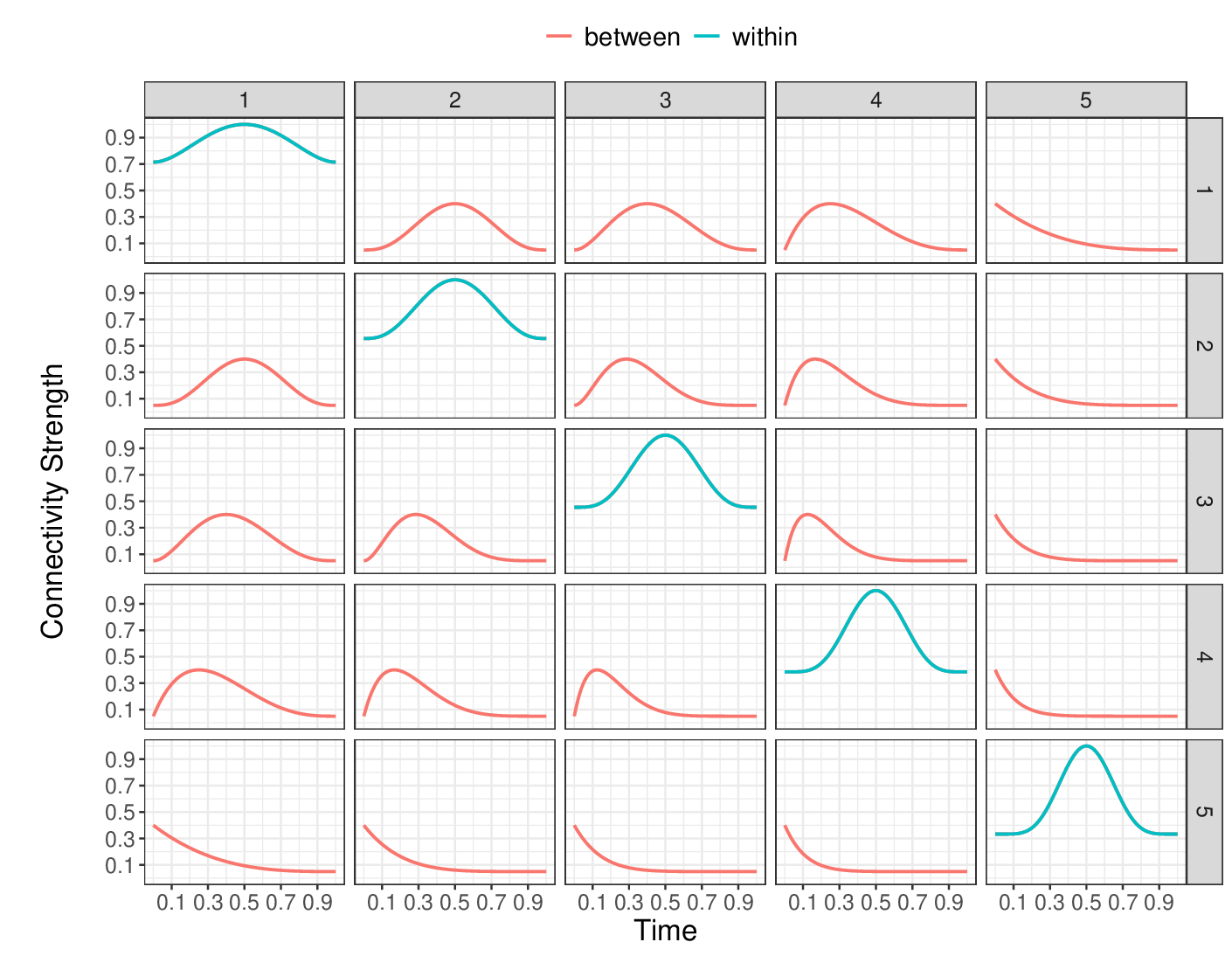}
	\caption{Time varying connectivity weights between the five communities $C_1, C_2, \dots, C_5$.}
	\label{fig: sbm1}
\end{figure}

\begin{figure}[t!]
		\centering
		\includegraphics[width=\textwidth]{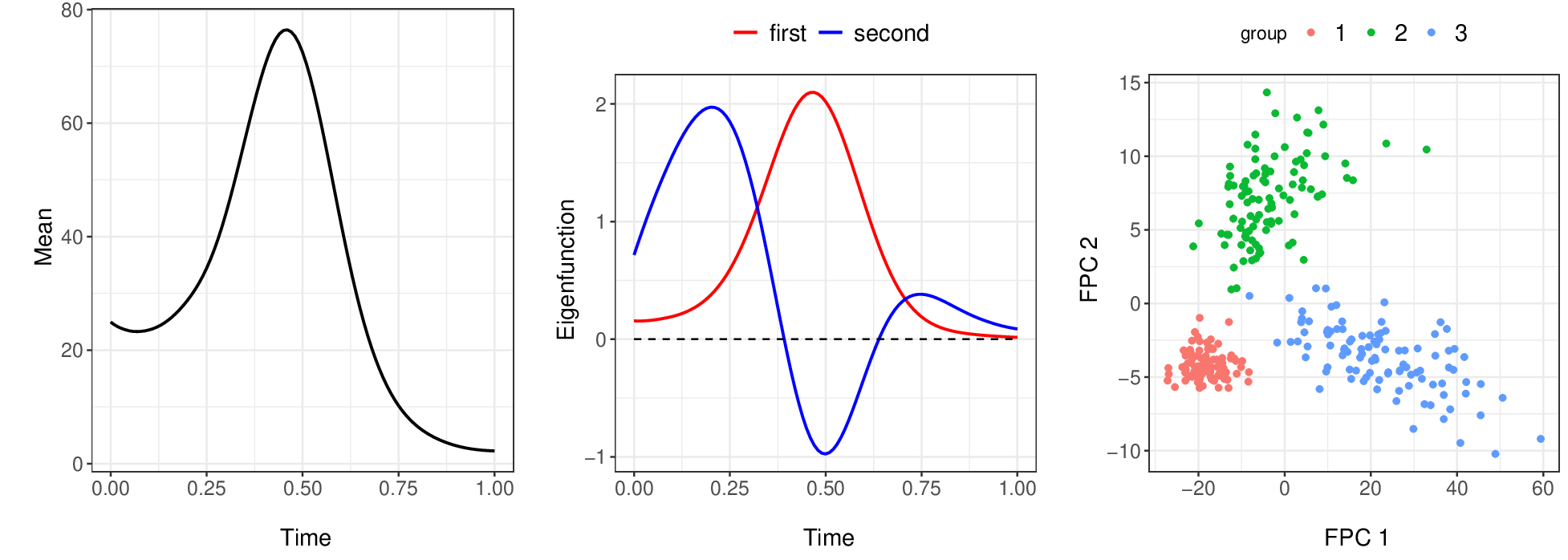}
		\caption{Mean function (left), eigenfunctions (middle) and the scatter plot of second versus first  FPC (right) obtained from functional principal component analysis of the distance trajectories for networks.  The red line corresponds to the first eigenfunction and the blue line to the second.} 
		\label{fig: f17}
	\end{figure}

\section{DISCUSSION}
\label{sec: dis}
We provide a framework for the analysis of time-varying object data,  where the random objects can take values in a general metric space, by defining  a generalized notion of mean function in the object space.  The key to our approach is that  functional data analysis methodology can be applied to squared distance functions of the subject specific curves from the mean function, elucidating the nature of the object time courses, including  empirical dynamics, identifying  clusters, and  detecting extremes and potential outliers in a sample of object trajectories. Another important application is to determine time-specific ranks for subjects, in terms of the distance of the subject's trajectory from the mean trajectory; such ranks have important applications in health monitoring, for example in neurodevelopment \citep{dosm:12}. 
The FPCs that we obtain for time-varying random objects can also be used for regression analysis, where object time courses are responses or possibly predictors. 

{As pointed out by a referee, another possible interpretation of  empirical dynamics can be gained with the notion of antimeans. For compact metric spaces, when the extreme values of the \F \ function $E(d^2(Y,\o))$ are attained in $\Omega$, the set of maximizers of the \F \ function forms a newly introduced notion of location parameter called the \F \ \emph{antimean} set \citep{wang:20,patr:16,patr:16:2}. Just like  \F \ means, \F \ antimeans can form useful statistics for describing and comparing samples of object data, as has been illustrated for  samples of shapes of lily flowers  \citep{patr:16:2}. For data that reside in compact manifolds where the notion of the data center is obscure, \F \ antimeans could provide constructive data summaries. Hence once could frame 
 empirical dynamics with respect to not only the \F \ means, but also the \F \ antimeans, which might uncover insightful findings in terms of regression to the mean or the antimean, leading to multiple points of interest. In such situations, in the context of centrifugality, one might tend to move closer to the \F \ antimean, for which \emph{anticentripetality} might be a fitting term.}

 Our data examples include samples of time-varying univariate probability distributions and samples of time evolving networks. In practice the object functions may  not be fully observed, but instead are  observed on a more or less dense time grid, possibly with noise.  In such situations one can opt for  smoothing the object trajectories if the observation grid is sufficiently dense. To implement preliminary smoothing and interpolation,  
 \F \ regression provides a possible  option  \cp{mull:19:3}. While the role of smoothing individual trajectories in FDA is well understood in the Euclidean case  \cp{hall:07:3, zhan:07}, it  remains an open problem to investigate its properties in the much more general setting of longitudinal object data.  
 An even bigger challenge that is also left  for future research is the case where measurement grids are sparse and irregular, {a problem that was recently studied in \cite{lin:20,dai:20} for data on Riemannian manifolds.}

\single
\references

\double
\section*{ONLINE SUPPLEMENT}

\subsection*{A.1 Overview of the Theoretical Derivations}
\label{supp1}
For an element $\o$ taking values in $\O$ and $s \in [0,1]$, define 
\be 
U_n(\o,s) = \frac{1}{n} \sum_{i=1}^n  f_{\o,s}(X_i) \ \text{and} \ U(\o,s) =  E\lbrace f_{\o,s}(X)  \rbrace \la{U}
\ee
where $f_{\o,s}(x)=d^2(x(s),\o)-d^2(x(s),\mu(s))$. In our setting, $\hat{\mu}(s)$ is the minimizer of $U_n(\o,s)$ and $\mu(s)$ is the minimizer of $U(\o,s)$. 
A critical step in establishing uniform convergence of the plug-in estimator of the \F \ covariance surface $\tilde{C}$ in (6)  given by $\hat{C}$ (7) is to find an upper bound for the quantity  $E\left(\operatornamewithlimits{sup}_{s \in [0,1], \om d(\o,\mu(s)) < \delta}  \left|U_n(\o,s)-U(\o,s)\right|\right)$ for $\delta > 0$. 
For this, it is convenient to consider function classes 
\begin{equation}
\label{f_delta}
\mathcal{F}_{\delta}=\lbrace f_{\o,s} :  s \in [0,1], \, d(\o,\mu(s)) < \delta \rbrace
\end{equation} and to apply empirical process theory.
An envelope function for this class is the constant function $F(x)=2M\delta$, where $M$ is the diameter of $\O$. The $L_{2}$ norm of the envelope function is $||F||_{2}=2M\delta$. 

\bco

Throughout this Supplement we use the abbreviation 
$\operatornamewithlimits{sup}_{d(\o,\mu(s)) < \delta} H(\o,s)$ for 
$\operatornamewithlimits{sup}_{\o \in \O: \, d(\o,\mu(s)) < \delta} H(\o,s)$ for 
any function $H$ and analogously for $\operatornamewithlimits{inf}$ and 
$\operatornamewithlimits{sup}_{d(\o,\mu(s)) >\epsilon} H(\o,s)$, 
in order to unclutter the notation. 

\fi

By Theorem 2.14.2 of \cite{well:96},
\begin{equation}
\label{eq: tail_bound2}
E\left( \operatornamewithlimits{sup}_{s \in [0,1],\om d(\o,\mu(s)) < \delta} \left|U_n(\o,s)-U(\o,s)\right|\right) \leq \frac{2M\delta \ J_{[]}(1,\mathcal{F}_{\delta},L_{2}(P))}{\sqrt{n}},
\end{equation}
where $J_{[]}(1,\mathcal{F}_{\delta},L_{2}(P))$ is the bracketing integral of the function class $\mathcal{F}_\delta$, which is 
\begin{equation*}
J_{[]}(1,\mathcal{F}_{\delta},L_{2}(P))= \int_{0}^{1} \sqrt{1+\log N(\eps ||F||_{2},\mathcal{F}_{\delta},L_2(P))} d\eps.
\end{equation*}
Here $N(\eps ||F||_{2},\mathcal{F}_{\delta},L_2(P))$ is the minimum number of balls of radius $\eps ||F||_{2}$ required to cover the function class $\mathcal{F}_{\delta}$ under the $L_2(P)$ norm.  
Assumptions (A2)-(A4)  imply 
\begin{equation}
\label{eq: entropy}
J_{[]}(1,\mathcal{F}_{\delta},L_{2}(P))=O(\sqrt{\log{1/\delta}}) \quad \text{as} \quad \delta \rightarrow 0;
\end{equation}
see Lemma \ref{lma:entropy} in Section \hyperref[supp3]{A.3}.

Equation \eqref{eq: entropy} provides 
a key result for obtaining the rate of convergence of $\hat{\mu}$. One can show that under assumptions (A1)-(A4),
\begin{equation}
\label{eq: rate}
\sup_{s \in [0,1]} d(\hat{\mu}(s),\mu(s))=O_P\left(\left(\frac{\sqrt{\log{n}}}{n}\right)^{\frac{1}{2(\beta-1)}}\right), 
\end{equation}
with $\beta$ as in (A2); see Lemma  \ref{lma: rate} in Section \hyperref[supp3]{A.3}.

In Lemma \ref{lma: unif_var} in Section \hyperref[supp2]{A.2} we establish the uniform convergence of the process $\frac{1}{n} \sum_{i=1}^{n} d^2(X_i(s),\hat{\mu}(s))$ to $\frac{1}{n} \sum_{i=1}^{n} d^2(X_i(s),\mu(s))$. This  is the main step to derive the  uniform convergence of the sample based \F \ covariance surface $\hat{C}(s,t)$ in (7)  to the oracle \F \ covariance surface $\tilde{C}(s,t)$ in (6)  (see  Lemma \ref{lma: unif_cov}).

\subsection*{A.2 Proofs of Theorem 1 and 2}
\label{supp2}
\begin{proof}[Proof of Theorem 1]
	For Theorem 1, we need two additional Lemmas:
	\begin{Lemma} [Convergence of sample \F \ variance function]  
		\label{lma: unif_var}
		Under assumptions (A1)-(A4),
		\begin{equation*}
		\sup_{s \in [0,1]} \sqrt{n} \left|\frac{1}{n} \sum_{i=1}^{n} \lbrace V_i(s)-V_i^*(s)\rbrace\right| = o_P(1).
		\end{equation*}
	\end{Lemma}
	\begin{proof}[Proof of Lemma \ref{lma: unif_var}]
		Let $\eps > 0$ and $\delta > 0$. Observe 
		\begin{align}
		& P\left(\sup_{s \in [0,1]} \sqrt{n} \left|\frac{1}{n} \sum_{i=1}^{n} d^2(X_i(s),\hat{\mu}(s))-\frac{1}{n} \sum_{i=1}^{n} d^2(X_i(s),\mu(s))\right| > \eps \right) \nonumber \\  & = P\left(\inf_{s \in [0,1]} \left(\frac{1}{n} \sum_{i=1}^{n} d^2(X_i(s),\hat{\mu}(s))-\frac{1}{n} \sum_{i=1}^{n} d^2(X_i(s),\mu(s))\right) <  -\frac{\eps}{\sqrt{n}} \right) \nonumber \\ & \leq  P\left(\inf_{s \in [0,1], \om d(\o,\mu(s)) < \delta} \left(\frac{1}{n} \sum_{i=1}^{n} d^2(X_i(s),\o)-\frac{1}{n} \sum_{i=1}^{n} d^2(X_i(s),\mu(s))\right) <  -\frac{\eps}{\sqrt{n}} \right) \nonumber \\ & + P \left(\sup_{s \in [0,1]} d(\hat{\mu}(s),\mu(s)) > \delta\right) \nonumber \\ & \leq P\left(\sup_{s \in [0,1], \om d(\o,\mu(s)) < \delta} \left|U_n(\o,s)-U(\o,s)\right| > \frac{\eps}{\sqrt{n}}\right) + P \left(\sup_{s \in [0,1]} d(\hat{\mu}(s),\mu(s)) > \delta\right) \label{eq: inequality}.
		\end{align}
		This is because $\inf_{s \in [0,1], \om d(\o,\mu(s)) < \delta} \lbrace E(d^2(X(s),\o))-E(d^2(X(s),\mu(s)) \rbrace =0$ implies 
		{\begin{align*}
			& \sup_{s \in [0,1], \om d(\o,\mu(s)) < \delta} \left|U_n(\o,s)-U(\o,s)\right| \\ & \geq \left| \inf_{s \in [0,1],\om d(\o,\mu(s)) < \delta} U_n(\o,s)-\inf_{s \in [0,1], \om  d(\o,\mu(s)) < \delta} U(\o,s)\right|.
			\end{align*}}
		Continuing from equation \eqref{eq: inequality}, we find that the second term goes to zero due to Lemma \ref{lma: mean_uniform} in Section \hyperref[supp3]{A.3}. Using Markov inequality, equation \eqref{eq: tail_bound2} and  Lemma \ref{lma:entropy} in Section \hyperref[supp3]{A.3} , by choosing $\delta$ sufficiently small, the first term is upper bounded by 
		\begin{equation}
		\frac{\sqrt{n}E\left(\sup_{s \in [0,1], \om d(\o,\mu(s)) < \delta} \left|U_n(\o,s)-U(\o,s)\right|\right)}{\epsilon} \leq \frac{2M\delta J_{[]}(1,\mathcal{F}_{\delta},L_{2}(P))}{\epsilon}.
		\end{equation}
		Given any $\eps > 0, h > 0$, we aim to show that there exists a sufficiently large integer $N$ such that for all $n \geq N$ we have 
		$$P\left(\sup_{s \in [0,1]} \sqrt{n} \left|\frac{1}{n} \sum_{i=1}^{n} d^2(X_i(s),\hat{\mu}(s))-\frac{1}{n} \sum_{i=1}^{n} d^2(X_i(s),\mu(s))\right| > \eps \right) < h.$$ To do this we first choose $\delta$ small enough such that $\frac{2M\delta J_{[]}(1,\mathcal{F}_{\delta},L_{2}(P))}{\epsilon} < \frac{h}{2}$ and then  choose $N$ large enough such that $P \left( \sup_{s \in [0,1]} d(\hat{\mu}(s),\mu(s)) > \delta\right) < \frac{h}{2}$. This completes the proof.
	\end{proof}
	{\begin{Lemma}[Convergence of sample \F \ covariance surface]  
		\label{lma: unif_cov}
		Under assumptions (A1)-(A4),
		\begin{equation*}
		 \sup_{s,t \in [0,1]}\left|\hat{C}(s,t)-\tilde{C}(s,t)\right|=O_P\left(\max \left\lbrace\left(\frac{\sqrt{\log{n}}}{n}\right)^{\frac{1}{2(\beta-1)}},\frac{1}{\sqrt{n}}\right\rbrace \right).
		\end{equation*}
	\end{Lemma}}
	{\begin{proof}[Proof of Lemma \ref{lma: unif_cov}] Observe that
		\begin{equation*}
		 \sup_{s,t \in [0,1]}\left|\hat{C}(s,t)-\tilde{C}(s,t)\right|  \leq \sup_{s,t \in [0,1]} A_1(s,t)+ 2M^2  \sup_{s \in [0,1]} A_2(s),
		\end{equation*}
		where $A_1(s,t)$ and $A_2(s)$ are given by
		\begin{equation*}
		A_1(s,t)=\left|\frac{1}{n} \sum_{i=1}^{n} d^2(X_i(s),\hat{\mu}(s)) d^2(X_i(t),\hat{\mu}(t))-\frac{1}{n} \sum_{i=1}^{n} d^2(X_i(s),{\mu}(s)) d^2(X_i(t),{\mu}(t))\right|,
		\end{equation*}
		\begin{equation*}
		A_2(s)=\left|\frac{1}{n} \sum_{i=1}^{n} d^2(X_i(s),\hat{\mu}(s)) -\frac{1}{n} \sum_{i=1}^{n} d^2(X_i(s),{\mu}(s)) \right|.
		\end{equation*}
        By using the triangle inequality
        \begin{align*}
            & A_1(s,t) \\ \leq & \left|\frac{1}{n} \sum_{i=1}^{n} d^2(X_i(s),\hat{\mu}(s)) d^2(X_i(t),\hat{\mu}(t))-\frac{1}{n} \sum_{i=1}^{n} d^2(X_i(s),{\mu}(s)) d^2(X_i(t),{\hat{\mu}}(t))\right| \\ & + \left|\frac{1}{n} \sum_{i=1}^{n} d^2(X_i(s),{\mu}(s)) d^2(X_i(t),\hat{\mu}(t))-\frac{1}{n} \sum_{i=1}^{n} d^2(X_i(s),{\mu}(s)) d^2(X_i(t),{{\mu}}(t))\right| \\ \leq & 4 M^3  \sup_{s \in [0,1]} d(\hat{\mu}(s),\mu(s)).
        \end{align*}
		Hence by Lemma \ref{lma: rate}, $\sup_{s,t \in [0,1]} A_1(s,t)= O_P\left(\left(\frac{\sqrt{\log{n}}}{n}\right)^{\frac{1}{2(\beta-1)}}\right)$. Using Lemma \ref{lma: unif_var} we have $\sqrt{n} \sup_{s \in [0,1]} A_2(s)=o_P(1)$ which completes our proof.
	\end{proof}}
	 \noindent For Theorem 1, observe that 
 \begin{equation}
 \label{eq: cov_break}
      \sup_{s,t \in [0,1]}\left|\hat{C}(s,t)-{C}(s,t)\right| \leq  \sup_{s,t \in [0,1]}\left|\hat{C}(s,t)-\tilde{C}(s,t)\right|+ \sup_{s,t \in [0,1]}\left|\tilde{C}(s,t)-{C}(s,t)\right|.
 \end{equation}
  {To upper bound the second term, observe that ${C}(s,t)$ is the covariance function of a second order stochastic process $\{Y(s)\}_{s \in [0,1]}$ where $Y(s)=d^2(X(s),\mu(s))$ for $s \in [0,1]$  and has mean function $E(Y(s))=E(d^2(X(s),\mu(s)))$. The process $\{Y(s)\}_{s \in [0,1]}$ can also be viewed  as a random element $\mathcal{Y} \in \mathcal{L}^2[0,1]$ with mean  $m=E\mathcal{Y}$ and  covariance operator generated by the covariance function $C(s,t)$ denoted by $\mathcal{C}$. Let $\mathcal{Y}_1,\mathcal{Y}_2, \dots,\mathcal{Y}_n$ be i.i.d.  realizations of $\mathcal{Y}$ from which one obtains  the sample covariance operator $\tilde{\mathcal{C}}$, the covariance operator generated by $\tilde{C}(s,t)$. By the boundedness of the metric, we have $E(||\mathcal{Y}||^4) < \infty$ and therefore by Theorem 8.1.2 of \cite{hsin:15}, $\sqrt{n}(\tilde{\mathcal{C}}-\mathcal{C})$ converges weakly to a Gaussian random element with mean zero and covariance operator given by
		\begin{equation}
		\mathcal{R}= E\left(((\mathcal{Y} -m)\otimes (\mathcal{Y}-m)-\mathcal{C})\otimes_{HS}((\mathcal{Y} -m)\otimes (\mathcal{Y}-m)-\mathcal{C})\right).
		\end{equation}
		By the continuous mapping theorem $\sqrt{n} \sup_{s,t \in [0,1]} \lvert \tilde{C}(s,t)-C(s,t)\rvert = O_P(1)$. In conjunction with Lemma \ref{lma: unif_cov} which upper bounds the first term in equation \eqref{eq: cov_break} we have
  \begin{equation*}
		 \sup_{s,t \in [0,1]}\left|\hat{C}(s,t)-{C}(s,t)\right|=O_P\left(\max \left\lbrace\left(\frac{\sqrt{\log{n}}}{n}\right)^{\frac{1}{2(\beta-1)}},\frac{1}{\sqrt{n}}\right\rbrace \right).
		\end{equation*}
  }

		\bco
		{\it Step 2: Convergence of the Marginals to a Multivariate Normal Distribution.}
		For any integer $k$ and any $(s_1,t_1), (s_2,t_2), \hdots, (s_k,t_k) \in [0,1]^2$, let $C_i=C(s_i,t_i),$ $\hat{C}_i=\hat{C}(s_i,t_i)$ and $\tilde{C}_i=\tilde{C}(s_i,t_i)$, and further  $C^k=\left(C_1,C_2,\hdots,C_k\right)$, $\hat{C}^k=\left(\hat{C}_1,\hat{C}_2,\hdots,\hat{C}_k\right)$ and $\tilde{C}^k=\left(\tilde{C}_1,\tilde{C}_2,\hdots,\tilde{C}_k\right)$. Observe that
		\begin{equation*}
		\sqrt{n} \left(\hat{C}^k-C^k\right)=\sqrt{n} \left(\hat{C}^k-\tilde{C}^k\right)+\sqrt{n} \left(\tilde{C}^k-C^k\right),
		\end{equation*}
		where the first term converges to zero in probability as a consequence of Lemma \ref{lma: unif_cov} and the second term converges in distribution to a multivariate normal distribution with mean zero and covariance matrix $\Sigma^k=(\Sigma^k_{ij})$,  where $\Sigma^k_{ij}=\mathcal{R}_{(s_i,t_i),(s_j,t_j)}$ as a consequence of Step 1. Therefore by Slutsky's theorem for any integer $k$ and any $(s_1,t_1), (s_2,t_2), \hdots, (s_k,t_k) \in [0,1]^2$, 	$\sqrt{n} \left(\hat{C}^k-C^k\right)$ converges in distribution to $N(0,\Sigma^k)$.\\
		\fi
		\bco
		{\it Step 3: Uniform Convergence of the Process $ \left(\hat{C}(s,t)-C(s,t)\right)$.}
		\\Let $(s_1,t_1),(s_2,t_2) \in [0,1]^2$ be such that $|s_1-s_2|+|t_1-t_2| < \delta$ for some $\delta > 0$. We need to show that for any $S>0$ and as $\delta \rightarrow 0$,
		\begin{equation}
		\limsup_{n \rightarrow \infty} P\left(\sup_{|s_1-s_2|+|t_1-t_2| < \delta}\sqrt{n} \left|\hat{C}(s_1,t_1)-C(s_1,t_1) -\hat{C}(s_2,t_2)+C(s_2,t_2)\right| > 2S\right) \rightarrow 0.
		\end{equation}
		This is true because
		\begin{align*}
		& \limsup_{n \rightarrow \infty} P\left(\sup_{|s_1-s_2|+|t_1-t_2| < \delta}\sqrt{n} \left|\hat{C}(s_1,t_1)-C(s_1,t_1) -\hat{C}(s_2,t_2)+C(s_2,t_2)\right| > 2S\right) \\ & \leq A_\delta+ B_\delta,
		\end{align*}
		where $A_\delta=\limsup_{n \rightarrow \infty} P\left(2\sup_{(s,t) \in [0,1]}\sqrt{n} \left|\hat{C}(s,t)-\tilde{C}(s,t)\right| > S\right)$  and $B_\delta=$ \newline $\limsup_{n \rightarrow \infty} P\left(\sup_{|s_1-s_2|+|t_1-t_2| < \delta}\sqrt{n} \left|\tilde{C}(s_1,t_1)-C(s_1,t_1) -\tilde{C}(s_2,t_2)+C(s_2,t_2)\right| > S\right)$. \newline By Lemma \ref{lma: unif_cov}, we have $A_{\delta}=0$ for any $S>0$ and from Step 1, the uniform asymptotic equicontinuity of the process $\sqrt{n} \left(\tilde{C}(s,t)-C(s,t)\right)$ implies  $B_\delta \rightarrow 0$ as $\delta \rightarrow 0$ for any $S>0$.\\
	
	From Steps 1-3 and Theorems 1.5.4 and 1.5.7 in \cite{well:96} the result follows. 
\fi
\end{proof}

\begin{proof}[Proof of Theorem 2]
	For any $j$, perturbation theory as stated for example in  Lemma 4.3 of  \cite{bosq:00} gives $|\hat{\lambda}_j-\lambda_j| \leq \sup_{s,t \in [0,1]}\left|\hat{C}(s,t)-C(s,t)\right|$. Along with Theorem 1 this implies 
	{$|\hat{\lambda}_j-\lambda_j|=O_P\left(\max \left\lbrace\left(\frac{\sqrt{\log{n}}}{n}\right)^{\frac{1}{2(\beta-1)}},\frac{1}{\sqrt{n}}\right\rbrace \right).$}
	For any $j$, one also has $$\sup_{s \in [0,1]} \left| \hat{\phi}_j(s)-\phi_j(s)\right| \leq 2\sqrt{2}\delta_j^{-1}\sup_{s,t \in [0,1]}\left|\hat{C}(s,t)-C(s,t)\right|. $$  Applying Theorem 1 and assumption (A5) leads to  
	{\begin{equation} \label{eq: eig}
	\sup_{s \in [0,1]} \left| \hat{\phi}_j(s)-\phi_j(s)\right| =O_P\left(\frac{1}{\delta_j}\max \left\lbrace\left(\frac{\sqrt{\log{n}}}{n}\right)^{\frac{1}{2(\beta-1)}},\frac{1}{\sqrt{n}}\right\rbrace \right).
	\end{equation}}
	
	Next we show the convergence of the scores $\hat{B}_{ij}$ to their population targets $B_{ij}$.  For this consider the functions $g_s(x)=d^2(x(s),\mu(s))$ and the function class $\mathcal{F}=\{g_s : s \in [0,1]\}$. Then 
	\begin{align*}
	|f_s(x)-f_t(x)|  \leq  2M (d(x(s),x(t)) +d(\mu(s),\mu(t))).
	\end{align*}
	Following the arguments in the proof of Lemma  \ref{lma: rate} in Section \hyperref[supp3]{A.3}, under assumptions (A2) and (A3), whenever $|s-t|<\left(\frac{\rho}{K}\right)^{\beta/\alpha}$, we have with $K= \left \lbrace 4MD^{-1}E(G(X)) \right \rbrace^{1/\beta}$
	\begin{equation*}
	d(\mu(s),\mu(t)) \leq K |s-t|^{\frac{\alpha}{\beta}},
	\end{equation*}
	where $\alpha$, $\beta$, $\rho$, $D$ and $G(\cdot)$ are as defined in assumptions (A2) and (A3). For $0 < u < 1$,  if $s_1,s_2, \hdots, s_K$ is a  $\left(\frac{\rho u}{K}\right)^{\frac{\beta}{\alpha}}$-net for $[0,1]$ {with metric $|\cdot|$} , then for any $s \in [0,1]$ one can find $s_j \in [0,1]$ such that
	\begin{equation*}
    |f_s(x)-f_{s_j}(x)| \leq 2M \left \lbrace G(x)  \left(\frac{\rho u}{K}\right)^{\beta} +K  \left(\frac{\rho u}{K}\right) \right \rbrace  \leq G'(x) u,
    \end{equation*}
	where $G'(x)=2M \left \lbrace G(x)  \left(\frac{\rho }{K}\right)^{\beta} +K  \left(\frac{\rho }{K}\right) \right \rbrace$. Hence the brackets $[f_{s_i}\pm G'u]$ cover the function class $\mathcal{F}$ and are of size $2 ||G'||_{L_2}u$, implying that for some constant $C'$,
	\begin{equation*}
	N_{[]}(u,\mathcal{F},L_2(P)) \leq N\left(\left(\frac{\rho u}{2||G'||_{L_2}K}\right)^{\frac{\beta}{\alpha}},[0,1],|\cdot|\right) \leq \frac{C'}{u^{\frac{\beta}{\alpha}}}.
	\end{equation*}
	Here  $N_{[]}(\eps,\mathcal{F},L_2(P))$ is the bracketing number,  defined in \cite{well:96} as  the minimum number of $\eps$-brackets needed to cover $\mathcal{F}$,  where an $\eps$-bracket is composed of pairs of functions $[l,u]$ such that $||l-u||_{L_2} < \eps$.  Then 
	\begin{align*}
	& \int_{0}^{1} \sqrt{\log N_{[]}(u,\mathcal{F},L_2(P)) } du \\ \leq & \int_{0}^{1} \sqrt{\log C' - \frac{\beta}{\alpha}\log u } \ du \leq   \sqrt{\log C'} + \int_{0}^{1} \sqrt{-\frac{\beta}{\alpha}\log u}\ du \\ = &  \sqrt{\log C'} + \sqrt{\frac{\beta}{\alpha}}\int_{0}^{\infty} x^{1/2}e^{-x}dx\,\,  < \,\, \infty.
	\end{align*}
	By Theorem 3.1 in \cite{ossi:87} we have that $\mathcal{F}$ has the Donsker property and therefore $\frac{1}{\sqrt{n}} \sum_{i=1}^{n} (V^\ast_i(t)-\nu^\ast(t))$ converges weakly to a Gaussian process limit with zero mean and covariance function 
	$H(s,t)=\mathrm{Cov}(V^\ast(s),V^\ast(t)).$
	By using Lemma \ref{lma: unif_var} and the Slutsky's theorem, $\frac{1}{\sqrt{n}} \sum_{i=1}^{n} (V_i(t)-\nu^\ast(t))$ is seen to converge weakly to a Gaussian process limit with zero mean and covariance function $H(s,t)$, and by 
	continuous mapping
	\begin{equation} \label{eq: var_boud}\sup_{t \in [0,1]}\left|\frac{1}{n} \sum_{i=1}^{n} (V_i(t)-\nu^\ast(t))\right|=O_P(n^{-1/2}).\end{equation}
	Observing 
	\begin{align*}
	& |\hat{B}_{ij}-B_{ij}| \\ = & \Big | \int_{0}^{1} \left(V_i(t)-\frac{1}{n}\sum_{k=1}^{n}V_k(t) \right)\hat{\phi}_j(t) dt-\int_{0}^{1} (V^\ast_i(t)-\nu^\ast(t))\phi_j(t) dt \Big | \\ \leq & \Big | \int_{0}^{1} \left(V_i(t)-\frac{1}{n}\sum_{k=1}^{n}V_k(t) \right) \left(\hat{\phi}_j(t)-\phi_j(t)\right) dt \Big | \\ & \quad + \Big  |\int_{0}^{1} \left(V_i(t)-\frac{1}{n}\sum_{k=1}^{n}V_k(t)- V^\ast_i(t)+\nu^\ast(t)\right) \phi_j(t) dt\Big | \\ \leq & 2 M^2 \sup_{t \in [0,1]} \left| \hat{\phi}_j(t)-\phi_j(t)\right| + \sup_{t \in [0,1]} \left| V_i(t)-V^\ast_i(t) \right|+ \sup_{t \in [0,1]} \left| \frac{1}{n}\sum_{i=1}^{n}V_i(t)-\nu^\ast(t) \right|
	\end{align*}
	and noting that $\sup_{t \in [0,1]} |V_i(t)-V^\ast_i(t)| \leq 2M \sup_{t \in [0,1]} d(\hat{\mu}(t),\mu(t))$, 
	Lemma \ref{lma: rate} in\hyperref[supp3]{A.3}  below and  \eqref{eq: var_boud} and \eqref{eq: eig} imply 
	{\begin{equation*}
	|\hat{B}_{ij}-B_{ij}|=O_P\left(\max\left\lbrace \frac{1}{\delta_j},1\right\rbrace \max \left\lbrace\left(\frac{\sqrt{\log{n}}}{n}\right)^{\frac{1}{2(\beta-1)}},\frac{1}{\sqrt{n}}\right\rbrace \right).
	\end{equation*}}
	
\end{proof}
\subsection*{A.3 Auxiliary Lemmas} 
\label{supp3}

Lemma \ref{lma: mean_uniform} is established  in \ci{mull:19:1}  and is reproduced here for convenience. Lemmas \ref{lma:entropy} and \ref{lma: rate} are established here under weaker conditions by modifying the proofs in \ci{mull:19:1}, along the way  also providing a correction of an algebraic error in \ci{mull:19:1}.

\begin{Lemma}
	\label{lma: mean_uniform}
	Under assumption (A1),
	\begin{equation*}
	\sup_{s \in [0,1]} d(\hat{\mu}(s),\mu(s))=o_P(1).
	\end{equation*}
\end{Lemma}

\begin{proof}
	See Proposition 4 in \ci{mull:19:1} for the proof.
\end{proof}

\begin{Lemma}
	\label{lma:entropy}
	Under assumptions (A2),(A3) and (A4), it holds  for the function class $\mathcal{F}_{\delta}$ as defined in equation \eqref{f_delta}  that  $J_{[]}(1,\mathcal{F}_{\delta},L^{2}(P)) =O(\sqrt{\log{1/\delta}})$ as $\delta \rightarrow 0$.
\end{Lemma}

\begin{proof}
	By definition $\mathcal{F}_\delta$ comprises of functions 
	$f_{\o,s}(x)=d^2(x(s),\o)-d^2(x(s),\mu(s))$. We see that
	\begin{align*}
	& \left| f_{\o,s}(x)-f_{\o^\ast,s^\ast}(x)\right|  \\ & \leq  \left| f_{\o,s}(x)-f_{\o^\ast,s}(x)\right|+ \left| f_{\o^\ast,s}(x)-f_{\o^\ast,s^\ast}(x)\right| \\ & \leq 4M \left( d(\o,\o^*)+  d(x(s),x(s^\ast))+d(\mu(s),\mu(s^\ast))\right).
	\end{align*}
	
	\noindent Since the random functions $X(\cdot)$ have uniformly H\"older continuous trajectories almost surely, Proposition 3 in \cite{mull:19:1} implies that $\mu(\cdot)$ is uniformly continuous. A consequence is that whenever $s$ and $s^\ast$ are close enough, such that $d(\mu(s),\mu(s^\ast))<\rho$, then using the curvature condition in assumption (A2) and the H\"older continuity of $X(\cdot)$ in assumption (A3) one has 
	\begin{align}
	& d^\beta(\mu(s),\mu(s^\ast)) \nonumber  \\ & \leq D^{-1} \lbrace  E(d^2(X(s),\mu(s^\ast)) -  E(d^2(X(s),\mu(s)) + E(d^2(X(s^\ast),\mu(s)) -  E(d^2(X(s^\ast),\mu(s^\ast)) \rbrace  \nonumber \\ & \leq 4 M D^{-1} E(d(X(s),X(s^\ast)) \nonumber \\ & \leq 4MD^{-1} E(G(X)) |s-s^\ast|^\alpha. \nonumber
	\end{align}
	Therefore whenever $s$ and $s^\ast$ are such that $|s-s^\ast|< \left(\frac{\rho^\beta}{4MD^{-1} E(G(X))}\right)^{1/\alpha}$, we have $\alpha/\beta$-H\"older continuity of $\mu(\cdot)$. Therefore, by taking  $K=\{4MD^{-1} E(G(X))\}^{1/\beta}$ one has for all $|s-s^\ast|< \left(\frac{\rho}{K}\right)^{\beta/\alpha}$,
	\begin{equation}
	d(\mu(s),\mu(s^\ast)) \leq K |s-s^\ast|^{\alpha/\beta}.
	\end{equation}
	
	Going back to the bound on 	$\left| f_{\o,s}(x)-f_{\o^\ast,s^\ast}(x)\right|$, by assumption (A3) one has for some constant $U>0$, 
	\begin{align*}
	\left| f_{\o,s}(x)-f_{\o^\ast,s^\ast}(x)\right|   \leq  U \left( d(\o,\o^*)+  G(x) |s-s^\ast|^\alpha +d(\mu(s),\mu(s^\ast))\right)
	\end{align*}
	and therefore for some constant $L > 0$,
	\begin{equation*}
||f_{\o,s}-f_{\o^\ast,s^\ast}||_{L_2} \leq L  \left( d(\o,\o^*)+  |s-s^\ast|^\alpha )+d(\mu(s),\mu(s^\ast)) \right).
	\end{equation*}
	In this proof, since we are looking at the limiting behavior $\delta \rightarrow 0$, one can henceforth let $\delta < \min\left(\rho,1\right)$ without any loss of generality. It follows that for any given $0<u<1$, if we take $s^\ast$ to be such that $|s-s^\ast| <  \left( \frac{ \delta u}{K}\right)^{\frac{\beta}{\alpha}}$, then $d(\mu(s),\mu(s^*))$ can be restricted to be less than or equal to $\delta u$ which means $\mu(s^\ast)$ is contained in $B_\delta (\mu(s))$, and therefore within $B_{2\delta}(\o)$.  Let $s_1,s_2, \hdots, s_K$ be a  $ \left( \frac{ \delta u}{K}\right)^{\frac{\beta}{\alpha}}$-net for $[0,1]$ with the metric $|\cdot|$ and $\o^{s_j}_1,\o^{s_j}_2,\hdots, \o^{s_j}_L$ be a $u$-net for $B_{2\delta}(\mu(s_j))$ with metric $d$. Then for any $s \in [0,1]$ and $\o$ such that $d(\o,\mu(s)) < \delta$, one can find $s_j$ and $\o^{s_j}_k$ such that for some constant $L' > 0$,
	\begin{equation*}
	||f_{\o,s}-f_{\o^{s_j}_k,s_j}||_{L_2} \leq L' u.
	\end{equation*}
	This implies that
	\begin{equation*}
	N(L'u,\mathcal{F}_{\delta},L^2(P)) \leq N\left(\left( \frac{ \delta u}{K}\right)^{\frac{\beta}{\alpha}},[0,1],|\cdot|\right) \sup_{s \in [0,1]} N\left(u,B_{2\delta}(\mu(s)),d \right), 
	\end{equation*}
	where $N(u,\mathcal{F}_{\delta},L^2(P))$ is the covering number, i.e. the minimum number of $L_2$-balls of radius $u$ needed to cover $\mathcal{F}_{\delta}$. One therefore has that
	\begin{align*}
	&\log N(2M\delta\eps,\mathcal{F}_{\delta},L^2(P)) \\  \leq & \log N\left(\left( \frac{ 2M\delta^2\eps}{KL'}\right)^{\frac{\beta}{\alpha}},[0,1],|\cdot|\right) +  \sup_{s \in [0,1]} \log N\left(\frac{2M\delta\eps}{L'},B_{2\delta}(\mu(s)),d \right) \\  \leq  & \log \left(C' (\eps\delta^2)^{-\beta/\alpha}\right)+ \sup_{s \in [0,1]} \log N\left(\frac{2M\delta\eps}{L'},B_{2\delta}(\mu(s)),d \right)
	\end{align*}
	for some constant $C'$.  Finally we can bound the entropy integral as
	\begin{align*}
	& \int_{0}^{1} \sqrt{1+\log N(\eps ||F||_{2},\mathcal{F}_{\delta},L^2(P))} d\eps \\ = & \int_{0}^{1} \sqrt{1+\log N(2M\delta\eps,\mathcal{F}_{\delta},L^2(P))} d\eps \\  \leq & \,\, 1+ \sqrt{\log(C')}+ \int_{0}^{1} \sqrt{-\frac{\beta}{\alpha}\log(\eps\delta^2)}d\eps+\int_{0}^{1} \sup_{s \in [0,1]}  \sqrt{\log N\left(2M\delta\eps,B_{2\delta}(\mu(s)),d \right)} d\eps \\  \leq & \,\, 1+ \sqrt{\log(C')}+ \sqrt{\frac{\beta}{\alpha}} \left \lbrace  \int_{0}^{1} \sqrt{-\log(\eps)} d\eps + \sqrt{-2\log(\delta)} \right \rbrace\\ & + \,\, \int_{0}^{1} \sup_{s \in [0,1]}  \sqrt{\log N\left(2M\delta\eps,B_{2\delta}(\mu(s)),d \right)} d\eps.
	\end{align*}
	Assumption (A4) then implies $ J_{[]}(1,\mathcal{F}_{\delta},L^{2}(P))= O(\sqrt{-\log \delta})$  as $\delta \rightarrow 0$, which completes the proof. 
\end{proof}

\begin{Lemma}
	\label{lma: rate}
	Under assumptions (A1)-(A4),
	\begin{equation*}
	\sup_{s \in [0,1]} d(\hat{\mu}(s),\mu(s))=O_P\left(\left(\frac{\sqrt{\log{n}}}{n}\right)^{\frac{1}{2(\beta-1)}}\right).
	\end{equation*}
\end{Lemma}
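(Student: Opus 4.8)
The plan is to prove this as an $M$-estimation rate statement via a peeling (``onion'') argument applied, uniformly in $s\in[0,1]$, to the pointwise minimizers $\hat{\mu}(s)=\argmin_{\omega\in\Omega}U_n(\omega,s)$ and $\mu(s)=\argmin_{\omega\in\Omega}U(\omega,s)$ from \eqref{U}. Two structural facts drive it: (i) $U_n(\mu(s),s)=U(\mu(s),s)=0$, so $U_n(\hat{\mu}(s),s)\le U_n(\mu(s),s)=0$ for every $s$; and (ii) near $\mu(s)$ the population criterion obeys the curvature bound $U(\omega,s)\ge D\,d^{\beta}(\omega,\mu(s))$ of (A2), uniformly in $s$. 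First I would localize: by Lemma~\ref{lma: mean_uniform} (using only (A1)) one has $\sup_{s}d(\hat{\mu}(s),\mu(s))=o_P(1)$, so after fixing a small $\rho_0\le\rho$ on whose $\rho_0$-balls (A2)--(A4) hold uniformly in $s$, the event $\{\sup_{s}d(\hat{\mu}(s),\mu(s))\ge\rho_0\}$ has probability $o(1)$ and is discarded.

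On the localized event I would peel. Fix a deterministic sequence $r_n\to\infty$ and, for integers $j\ge1$, set $\mathcal{S}_{j,n}(s)=\{\omega\in\Omega:2^{j-1}<r_n^{\beta-1}d^{\beta-1}(\omega,\mu(s))\le2^{j}\}$, with outer radius $\delta_j=2^{j/(\beta-1)}/r_n$. If $\hat{\mu}(s)\in\mathcal{S}_{j,n}(s)$ for some $s$, then by (A2) and $U_n(\hat{\mu}(s),s)\le0$,
\[
\sup_{s\in[0,1]}\ \sup_{\omega:\,d(\omega,\mu(s))<\delta_j}\bigl|U_n(\omega,s)-U(\omega,s)\bigr|\ \ge\ U(\hat{\mu}(s),s)\ \ge\ D\,2^{\beta(j-1)/(\beta-1)}r_n^{-\beta},
\]
so that, up to the negligible localization term,
\[
P\Bigl(\sup_{s}r_n\,d(\hat{\mu}(s),\mu(s))>2^{L/(\beta-1)}\Bigr)\le\sum_{j\ge L:\,2^{j}\le(r_n\rho_0)^{\beta-1}}P\Bigl(\sup_{s}\sup_{d(\omega,\mu(s))<\delta_j}|U_n-U|\ge D\,2^{\beta(j-1)/(\beta-1)}r_n^{-\beta}\Bigr)+o(1).
\]

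Each summand I would bound by Markov's inequality together with the maximal inequality \eqref{eq: tail_bound2} (Theorem~2.14.2 of \cite{well:96}) and the bracketing-entropy bound $J_{[]}(1,\mathcal{F}_{\delta},L_2(P))=O(\sqrt{\log(1/\delta)})$ of Lemma~\ref{lma:entropy}, giving $E\bigl(\sup_{s}\sup_{d(\omega,\mu(s))<\delta_j}|U_n-U|\bigr)=O\bigl(\delta_j\sqrt{\log(1/\delta_j)}\,n^{-1/2}\bigr)$. Substituting and simplifying the powers of $2$ and $r_n$ (and using $\log(1/\delta_j)\le\log r_n$) makes the $j$-th summand $O\bigl(r_n^{\beta-1}\sqrt{\log r_n}\,n^{-1/2}\,2^{-j}\bigr)$; the geometric series over $j\ge L$ converges \emph{precisely because $\beta>1$}, with total $O\bigl(r_n^{\beta-1}\sqrt{\log r_n}\,n^{-1/2}\,2^{-L}\bigr)$. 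Choosing $r_n$ with $r_n^{\beta-1}\sqrt{\log r_n}=O(\sqrt{n})$, e.g.\ $r_n=(n/\log n)^{1/(2(\beta-1))}$, renders this $O(2^{-L})$, arbitrarily small for large $L$; hence $\sup_{s}d(\hat{\mu}(s),\mu(s))=O_P(r_n^{-1})$, which is at least as strong as the asserted $O_P\bigl((\sqrt{\log n}/n)^{1/(2(\beta-1))}\bigr)$.

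I expect the main obstacle to be the exponent bookkeeping inside the peeling, which is precisely where \cite{mull:19:1} erred: the curvature exponent $\beta$ from (A2) combines with the fact that the envelope of $\mathcal{F}_\delta$ has $L_2(P)$-norm of order $\delta$ (so \eqref{eq: tail_bound2} contributes a factor linear in $\delta$), yielding effective geometric decay $2^{-j}$ and thus $r_n\asymp n^{1/(2(\beta-1))}$ rather than the erroneous $n^{1/\beta}$. A secondary technical point is that every ingredient --- (A2), Lemma~\ref{lma:entropy}, and \eqref{eq: tail_bound2} --- must hold with a supremum over $s\in[0,1]$; this is why (A2) and Lemma~\ref{lma:entropy} are formulated with $\sup_s$, and why the almost-sure H\"{o}lder continuity of $X(\cdot)$ in (A3), which by (A2) forces H\"{o}lder continuity of $\mu(\cdot)$, keeps the shells well-behaved uniformly in $s$.
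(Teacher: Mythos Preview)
Your proposal is correct and follows essentially the same peeling argument as the paper's proof: localize via Lemma~\ref{lma: mean_uniform}, peel into shells around $\mu(s)$, use (A2) to lower bound $U(\hat\mu(s),s)$ on each shell, and control the empirical process on each shell by Markov's inequality together with the maximal inequality \eqref{eq: tail_bound2} and the entropy bound of Lemma~\ref{lma:entropy}, then sum a geometric series and choose $r_n=(n/\log n)^{1/(2(\beta-1))}$. The only cosmetic difference is that the paper peels on $q_n\,d(\omega,\mu(s))$ with shells $\{2^{j-1}<q_n d\le 2^j\}$, obtaining summands $O(2^{(1-\beta)j})$, whereas you peel on $r_n^{\beta-1}d^{\beta-1}(\omega,\mu(s))$, obtaining summands $O(2^{-j})$; these are equivalent reparametrizations and yield the same rate.
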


\begin{proof}
	For a sequence $\{q_n\}$  define the sets 
	\begin{equation*}
	S_{j,n}(s)= \lbrace \o : 2^{j-1} < q_n d(\o,\mu(s)) \leq 2^j\rbrace .
	\end{equation*}
	Choose $\rho > 0$ to satisfy (A2). For any integer $L$,
	\begin{align}
	& P\left(q_n \sup_{s \in [0,1]} d(\hat{\mu}(s),\mu(s))> 2^L \right) \nonumber \\  \leq &
	P\left(\sup_{s \in [0,1]} d(\hat{\mu}(s),\mu(s)) \geq \rho\right)+ \sum_{j \geq L, 2^j \leq q_n \rho} P\left(\hat{\mu}(s) \in S_{j,n}(s) \ \text{for all} \ s \in [0,1]\right) \nonumber \\  \nonumber  \leq &
	P\left(\sup_{s \in [0,1]} d(\hat{\mu}(s),\mu(s)) \geq \rho\right)\\ &  \quad\quad  + \sum_{j \geq L, 2^j \leq q_n {\rho}} P\left( \sup_{s \in [0,1], \om \o \in S_{j,n}(s)} \left|U_n(\o,s)-U(\o,s)\right| \geq D \frac{2^{\beta(j-1)}}{q^\beta_n}\right), \label{eq: onion}
	\end{align}
	where \eqref{eq: onion} follows from assumption (A2) by observing that for any $s \in [0,1]$, 
	\begin{align*}
	  \sup_{\om \o \in S_{j,n}(s)} \left|U_n(\o,s)-U(\o,s)\right|\\ & \hspace{-3cm}  \geq  \left|\inf_{\om \o \in S_{j,n}(s)} U_n(\o,s)- \inf_{\om \o \in S_{j,n}(s)} U(\o,s)\right| \geq D \frac{2^{\beta(j-1)}}{q^\beta_n}.
	\end{align*}
	The first term in  the last line of  \eqref{eq: onion} goes to zero by the uniform convergence of \F \ mean and for each $j$ in the second term it holds that  $d(\o,\mu(s)) \leq \rho$.  We now use the bound on the entropy integral from Lemma \ref{lma:entropy} which is bounded above by $J\sqrt{\log {1/\delta}}$ for all small enough $\delta > 0$,  where $J > 0$ is a constant. By Theorem 2.14.2 of \cite{well:96},
	\begin{equation}
	\label{eq: tail_bound}
	E\left(\sup_{s \in [0,1], \om d(\o,\mu(s)) < \delta} \left|U_n(\o,s)-U(\o,s)\right|\right) \leq \frac{2M\delta \ J_{[]}(1,\mathcal{F}_{\delta},L_{2}(P))}{\sqrt{n}}.
	\end{equation}

	Using this, the entropy bound and the  Markov inequality,  the second term is upper bounded up to a constant by
	\begin{equation}
	\label{eq: upp}
	\sum_{j \geq L, 2^j \leq q_n \tilde{\alpha}} \frac{2MJ 2^j}{\sqrt{n}q_n} \sqrt{\log \frac{q_n}{2^{j-1}}}\frac{q_n^\beta}{D 2^{\beta(j-1)}}.
	\end{equation}
	Setting $q_n=\left( \frac{n}{\sqrt{\log n}} \right)^{1/2(\beta-1)}$, the series in \eqref{eq: upp} is upper bounded by $O \left( \sum_{j \geq L, 2^j \leq q_n \rho} 2^{(1-\beta)j} \right)$, which can be made arbitrarily small by choosing $L$ and $n$ large. This proves the desired result that $\sup_{s \in [0,1]} d(\hat{\mu}(s),\mu(s))=O_P(q_n^{-1})=O_P\left(\left( \frac{n}{\sqrt{\log n}} \right)^{\frac{-1}{2(\beta-1)}}\right)$.
\end{proof}

\setcounter{figure}{0} \renewcommand{\thefigure}{A.\arabic{figure}} 
\subsection*{A.4 Additional Figures: Chicago Divvy Bike Data} 
\label{supp4}

\begin{figure}[H]
	\centering
	\includegraphics[scale = .5]{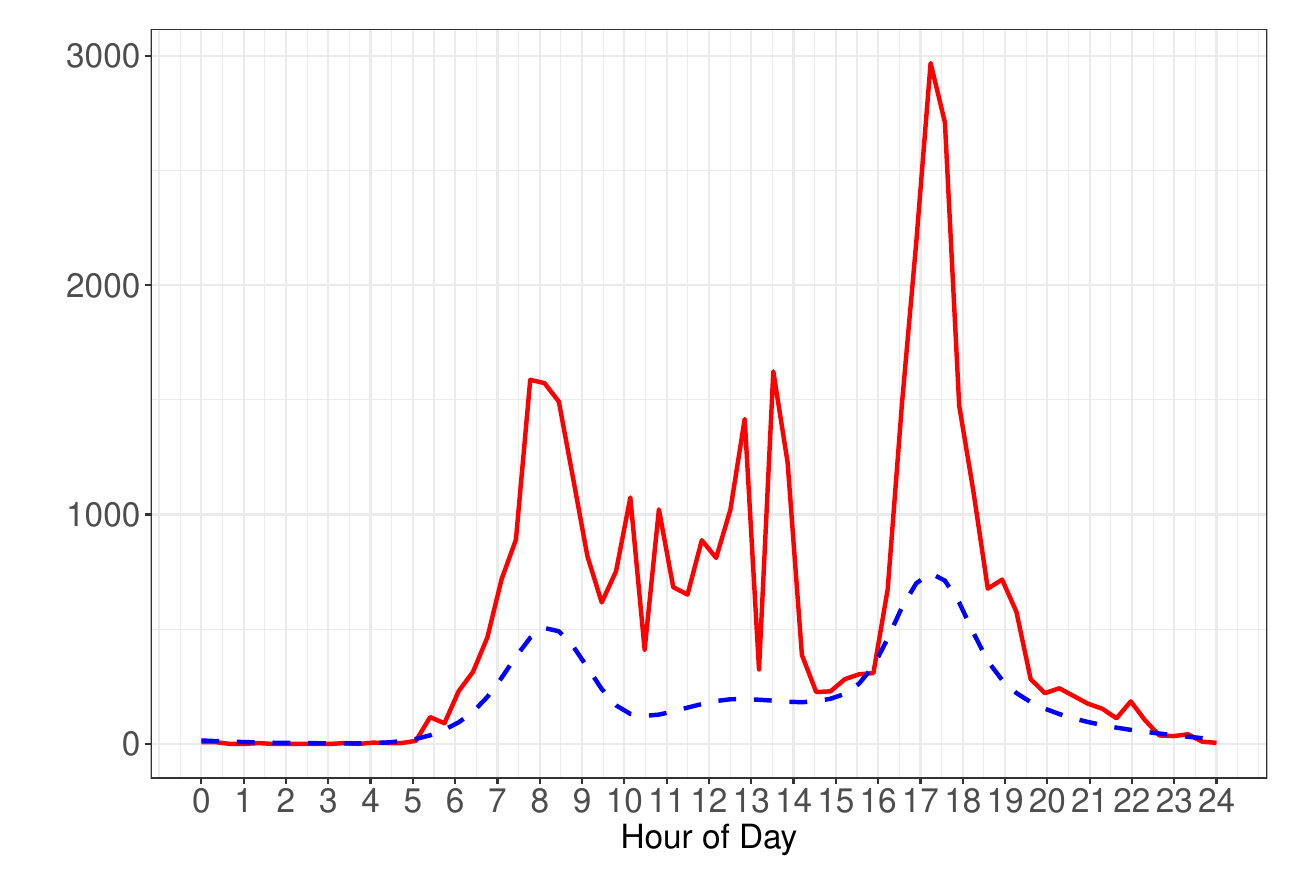}
	\caption{Squared distance trajectory corresponding to 21 August 2017 (solid red) as compared against the overall \F \ variance function (dashed blue).} 
	\label{fig: s1}
\end{figure}

\begin{figure}[H]
	\centering
	\includegraphics[scale = .5]{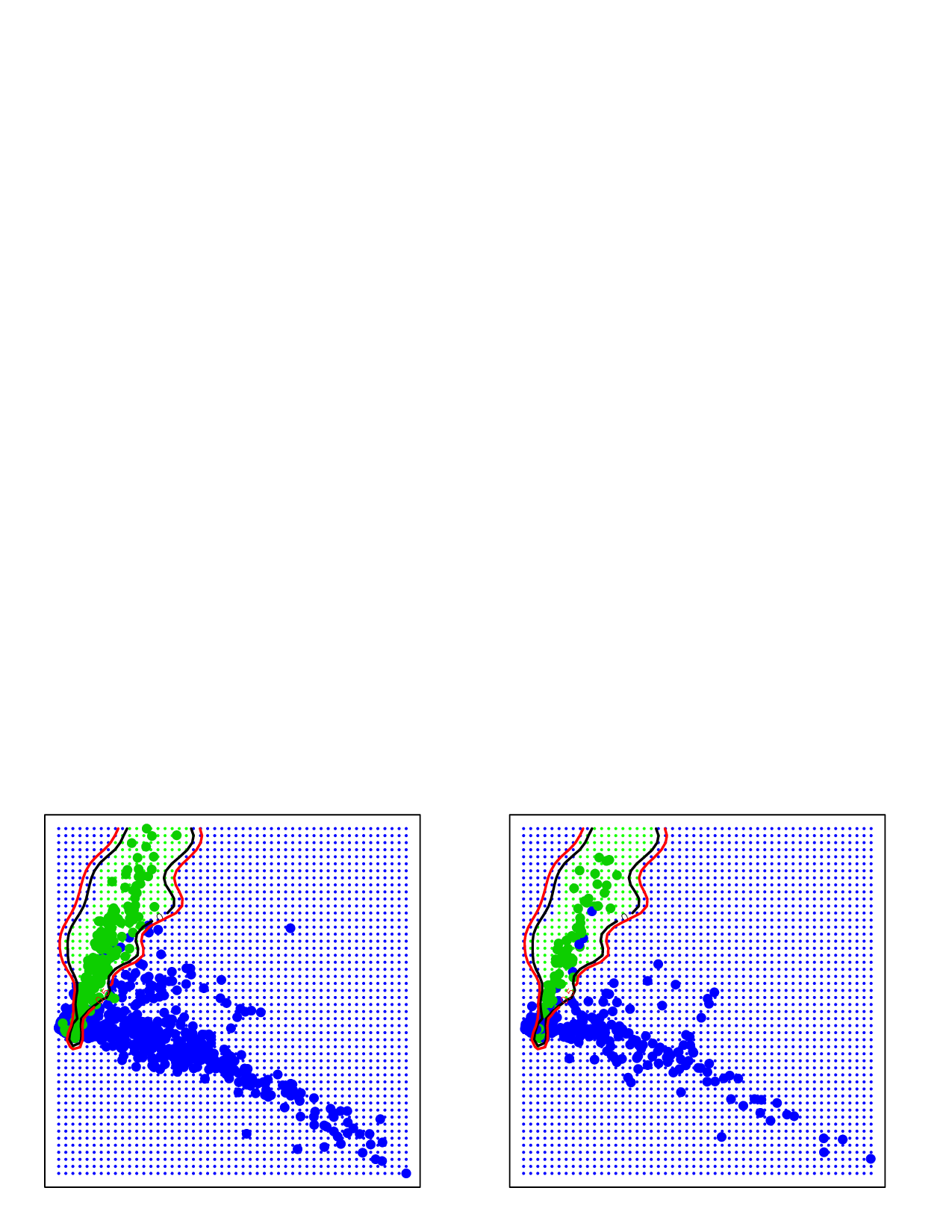}
	\caption{Classification boundary of the  support vector machine (black) and Bayes decision boundary (red),  for the training data (left) and the test data (right).} 
	\label{fig: s2}
\end{figure}

\begin{figure}[H]
	\centering
	\includegraphics[width=\textwidth]{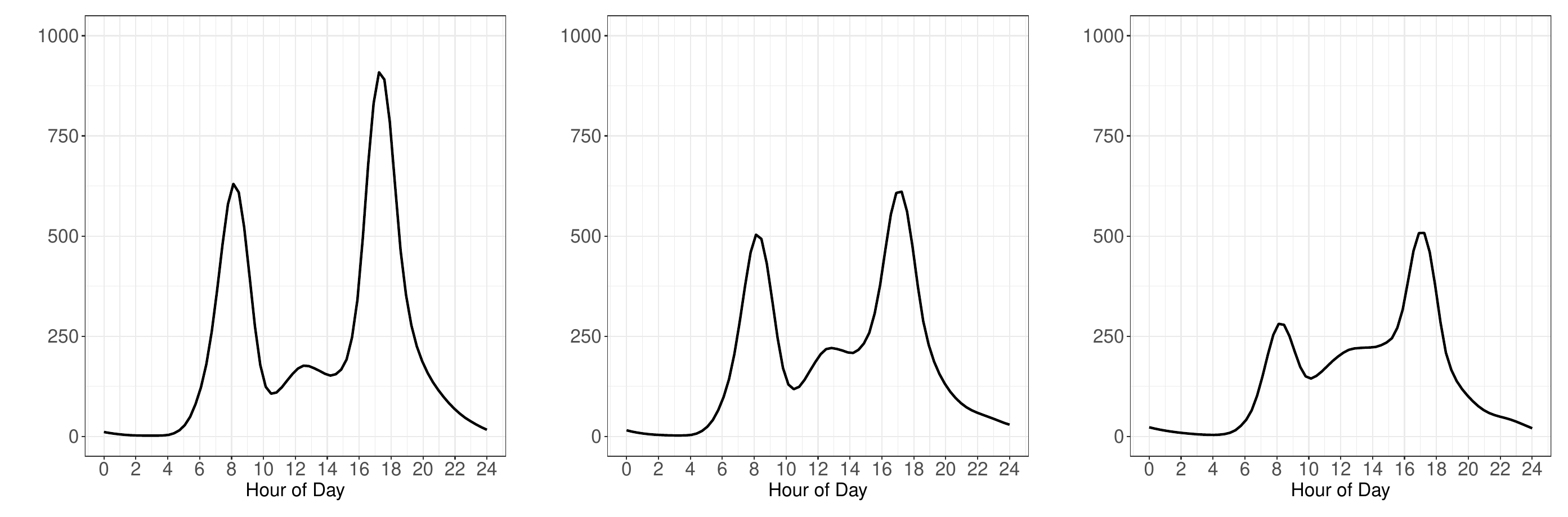}
	\caption{Mean function $\frac{1}{n}\sum_{i=1}^{n} V_i(t)$ of the \F \ variance trajectories of the time courses of daily Divvy bike trip networks in Chicago for weekdays (left), Fridays (middle) and weekends and holidays (right).} 
	\label{fig: s3}
\end{figure}

\begin{figure}[H]
	\centering
	\includegraphics[width=\textwidth]{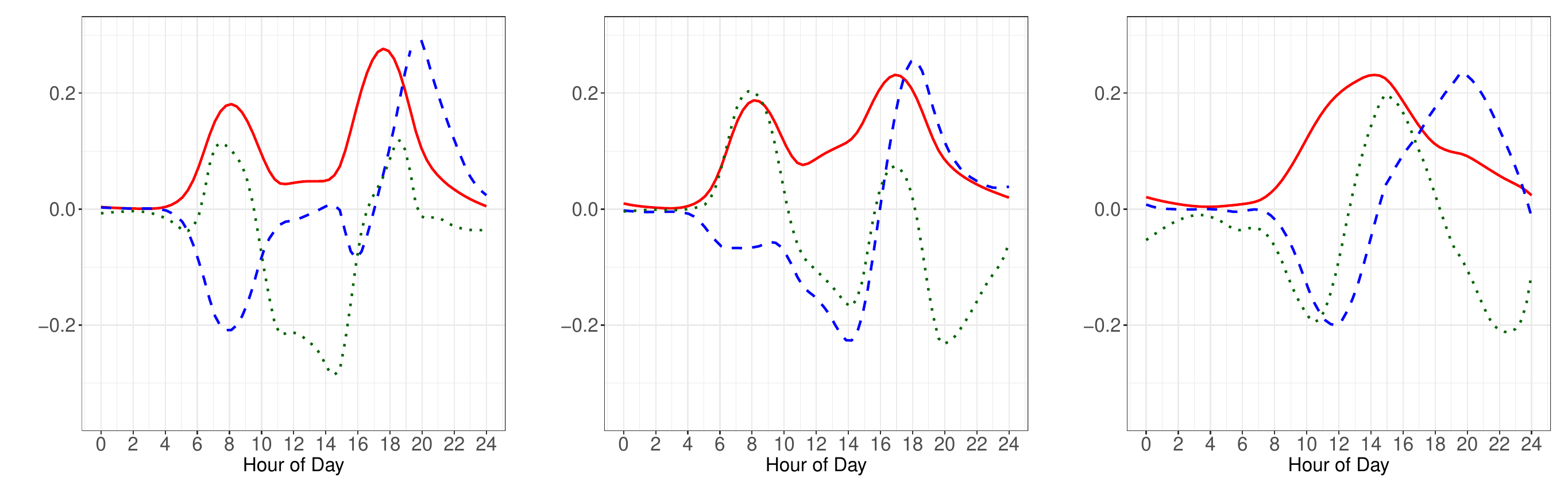}
	\caption{Eigenfunctions corresponding to the FPCA of the daily \F \ variance trajectories for weekdays (left), Fridays (middle) and weekends and holidays (right). The solid red line corresponds to the first, the dashed blue line to the second and the dotted green line to the third eigenfunction.}
	\label{fig: s4}
\end{figure}

\begin{figure}[H]
	\centering
	\includegraphics[width=\textwidth]{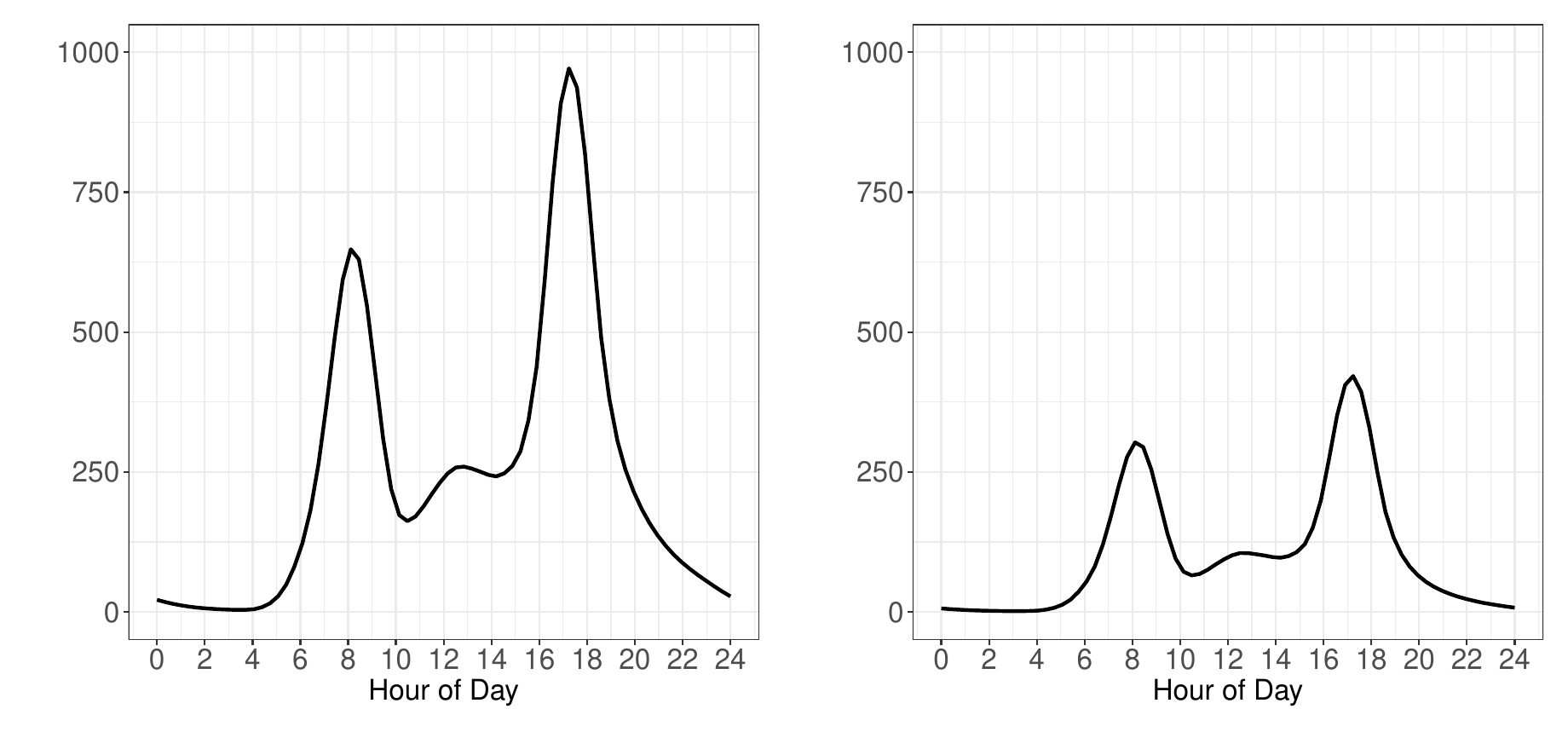}
	\caption{Mean function $\frac{1}{n}\sum_{i=1}^{n} V_i(t)$ of the \F \ variance trajectories of the time courses of daily Divvy bike trip networks in Chicago for Fridays in Spring, Summer and early Fall  (left) and in late Fall and Winter (right).} 
	\label{fig: s5}
\end{figure}

\begin{figure}[H]
	\centering
	\includegraphics[width=\textwidth]{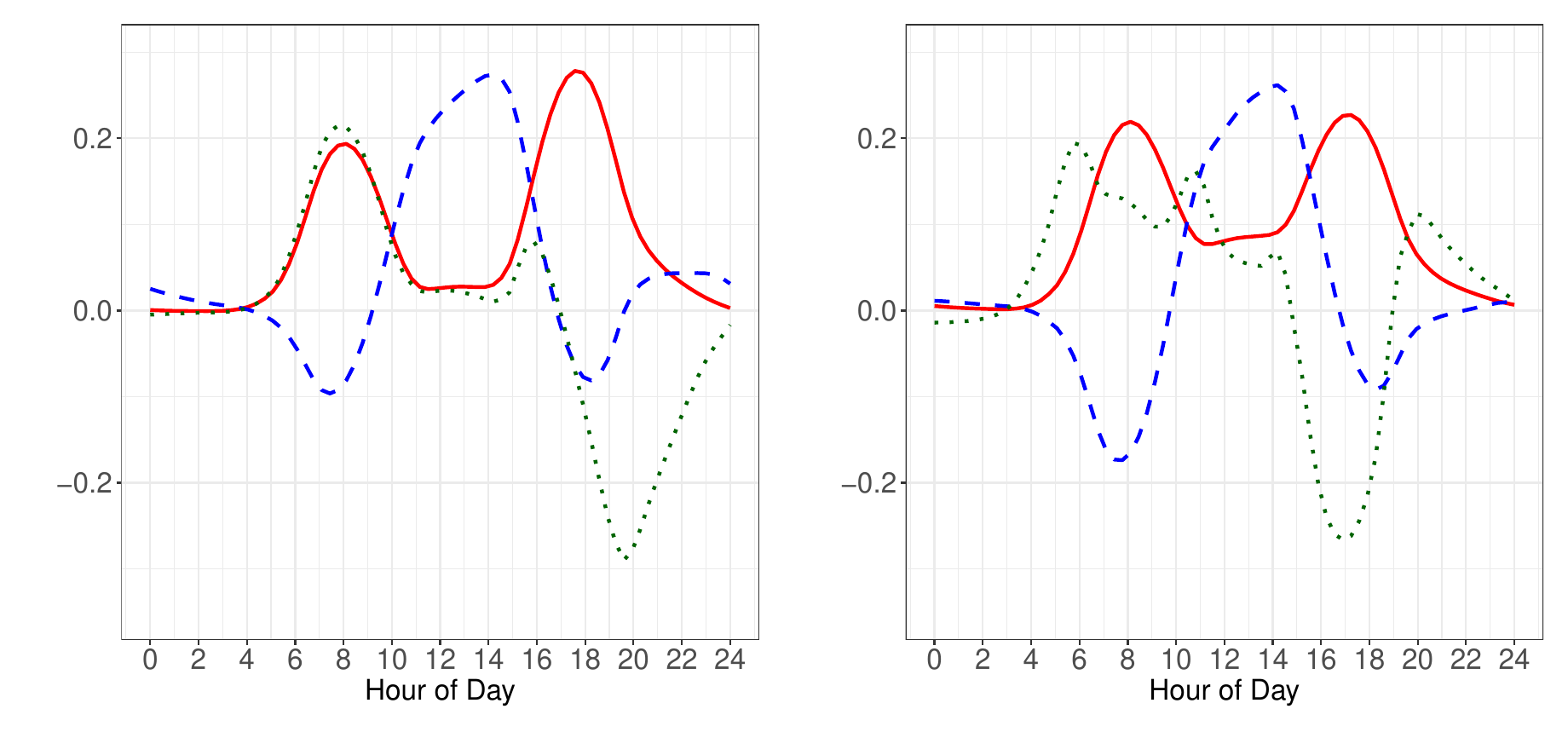}
	\caption{Eigenfunctions corresponding to the FPCA of the daily \F \ variance trajectories for spring, summer and early fall, Fridays (left) and late fall and winter (right). The solid red line corresponds to the first eigenfunction, the dashed blue line to the second and the dotted green line to the third eigenfunction.}
	\label{fig: s6}
\end{figure}

\subsection*{A.5 Discussion on examples of $(\Omega,d)$ satisfying assumptions (A1)-(A5)} 
\label{spaces}

\subsubsection*{Distributions}

Let $\Omega$ be the space of univariate probability distributions represented as quantile functions and let $d$ be the 2-Wasserstein metric. For quantile functions $Q_1(\cdot)$ and $Q_2(\cdot)$, the 2-Wasserstein metric is the $L_2$ metric is 
\begin{equation*}
d^2(Q_1,Q_2) = \int_{0}^1 (Q_1(t)-Q_2(t))^2 dt= ||Q_1-Q_2||^2_{L_2}.
\end{equation*}
Let $Q_{X(s)}$ be the quantile function associated with the distribution of $X(s)$ for a fixed $s$ and $Q_\o$ the quantile function associated with the distribution  $\o \in \O$.  Then \begin{align*}
E\{d^2(X(s),\o)\} = &  E\{||Q_{X(s)}-Q_\o||^2_{L_2}\} \\ = & E\{\langle Q_{X(s)}-E(Q_{X(s)})+E(Q_{X(s)})-Q_\o, Q_{X(s)}-E(Q_{X(s)})+E(Q_{X(s)})-Q_\o \rangle_{L_2}\} \\ = & E\{\langle Q_{X(s)}-E(Q_{X(s)}), Q_{X(s)}-E(Q_{X(s)}) \rangle_{L_2}\}  + \langle E(Q_{X(s)})-Q_\o, E(Q_{X(s)})-Q_\o \rangle_{L_2} \\ = &  E\{||Q_{X(s)}-E(Q_{X(s)})||^2_{L_2}\}+  ||E(Q_{X(s)})-Q_\o||^2_{L_2}.
\end{align*}
Therefore $Q_{\mu(s)} = \operatornamewithlimits{argmin}_{\o \in \O} ||E(Q_{X(s)})-Q_\o||^2_{L_2}=E(Q_{X(s)}) $ by the convexity of the space of quantile functions. In the sample version, let $\bar{Q}(s)(\cdot)= \frac{1}{n} \sum_{i=1}^n Q_{X_i(s)} (\cdot)$. The same  algebra as above implies 
\begin{equation*}
\frac{1}{n} \sum_{i=1}^n d^2(X_i(s),\o)= \frac{1}{n} \sum_{i=1}^n ||Q_{X_i(s)}-\bar{Q}(s)||^2_{L_2}+ ||\bar{Q}(s)-Q_\o||^2_{L_2},
\end{equation*} 
whence  $Q_{\hat{\mu}(s)}=\argmin_{\o \in \O} ||\bar{Q}(s)-Q_\o||^2_{L_2}= \bar{Q}(s)$ by the convexity of the space of quantile functions. Hence both $\mu(s)$ and $\hat{\mu}(s)$ exist and are unique as their quantile representations exist and are unique. Moreover,  the above calculations also imply that 
\begin{equation*}
E\{d^2(X(s),\o)\} - E\{d^2(X(s),\mu(s))\} = ||Q_\o-Q_{\mu(s)}||^2_{L_2}
\end{equation*}
and 
\begin{equation*}
\frac{1}{n} \sum_{i=1}^n d^2(X_i(s),\o)- \frac{1}{n} \sum_{i=1}^n d^2(X_i(s),\hat{\mu}(s)) = ||Q_\o-Q_{\hat{\mu}(s)}||^2_{L_2},
\end{equation*}
whence 
\begin{equation*}
\inf_{s \in [0,1]} \inf_{\om d(\o,\mu(s)) > \epsilon} E(d^2(X(s),\o))-E(d^2(X(s),\mu(s))) > 0.
\end{equation*}
Therefore (A1) holds  with $\tau(\epsilon)=\epsilon^2$ and (A2) holds with $D=1$ and $\beta=2$, while  (A4) is satisfied using the arguments in the proof of Proposition 1 of \cite{mull:19:3}. 

Assumption (A3) is a common requirement for classical real-valued functional data, where trajectories are sometimes  assumed to be twice differentiable with bounded second derivative. An example of time varying distributions is a sample of random density valued trajectories, where  $X(t)$ is a normal distribution for all $t$. For an explicit sample construction, assume that  the  means of these normal distributions are  $\mu(t)$ and the variances are  $\sigma^2(t)$,  both taken as random functions 
\begin{equation*}
\mu(t)= X_1+ X_2 \sin(2 \pi t) \quad \text{and} \quad \sigma^2(t)= X_3 e^{0.25 t},
\end{equation*}
where $X_1$ and $X_2$ are independent uniform r.v.s  in $[-1,1]$ and $X_3$ is an independent exponential r.v., with mean one, truncated to lie in $[0.25,3]$. Each $\o(\cdot)$ in the sample space of trajectories is therefore a density valued process, where $\o(s)$ is a normal distribution with mean of the form $a+b \sin{2\pi s}$, where $a,b \in [-1,1]$,  and variance of the form $c e^{0.25s}$,  where $c \in [0.25,3]$. For two univariate normal distributions, $f_1 \sim N(\mu_1,\sigma^2_1)$ and $f_2 \sim N(\mu_2,\sigma^2_2)$, the Wasserstein distance between $f_1$ and  $f_2$ can be explicitly calculated as 
\begin{equation*}
d(f_1,f_2)= \sqrt{(\mu_1-\mu_2)^2+(\sigma_1-\sigma_2)^2},
\end{equation*}
which for the example above yields 
\begin{equation}
d(X(s),X(t))= \sqrt{X_2^2 (\sin(2 \pi s)-\sin(2 \pi t))^2+ X_3^2 (e^{0.25s}-e^{0.25t})^2}.
\end{equation}
Condition (A3) is satisfied as both $\sin 2 \pi t$ and $e^{0.25t}$ have bounded derivatives in $[0,1]$ and the second moments of $X_1$, $X_2$ and $X_3$ are finite.

\subsubsection*{Networks}

Let $\Omega$ be the space of graph adjacency matrices or graph Laplacians of weighted networks, with uniformly bounded weights and let $d$ be the Frobenius metric. For matrices $U$ and $V$ the Frobenius metric is given by
\begin{equation*}
d^2(U,V) = \sum_{i,j} (u_{ij}-v_{ij})^2=(U-V)^T(U-V).
\end{equation*}
For any $\o \in \O$, by the properties of the Frobenius metric,
\begin{align*}
E\{d^2(X(s),\o)\} = &  E\{(X(s)-\o)^T(X(s)-\o)\} \\ = & E\{(X(s)-E(X(s))+E(X(s))-\o)^T(X(s)-E(X(s))+E(X(s))-\o)\} \\ = & E\{(X(s)-E(X(s)))^T(X(s)-E(X(s)))\} + (E(X(s))-\o)^T(E(X(s))-\o) \\ = & E\{d^2(X(s),E(X(s)))\} + d^2(E(X(s)),\o).
\end{align*}
Therefore $\mu(s)=\operatornamewithlimits{argmin}_{\o \in \O} E\{d^2(X(s),\o)\} $ = $\operatornamewithlimits{argmin}_{\o \in \O} d^2(E(X(s)),\o)=E(X(s)) $ by the convexity of the space of graph Laplacians and adjacency matrices. In the sample version, let $\bar{X}(s)= \frac{1}{n} \sum_{i=1}^n X_i(s)$. By following the above algebra, one can show that
\begin{equation*}
\frac{1}{n} \sum_{i=1}^n d^2(X_i(s),\o)= \frac{1}{n} \sum_{i=1}^n d^2(X_i(s),\bar{X}(s))+ d^2(\bar{X}(s),\o),
\end{equation*} 
which implies that $\hat{\mu}(s)= \argmin_{\o \in \O} \frac{1}{n} \sum_{i=1}^n d^2(X_i(s),\o)= \argmin_{\o \in \O} d^2(\bar{X}(s),\o)= \bar{X}(s)$, again by the convexity of the space of graph Laplacians and adjacency matrices. Hence both $\mu(s)$ and $\hat{\mu}(s)$ exist and are unique. Moreover,  the above calculations imply that
\begin{equation*}
E\{d^2(X(s),\o)\} - E\{d^2(X(s),\mu(s))\} = d^2(\o,\mu(s))
\end{equation*}
and 
\begin{equation*}
\frac{1}{n} \sum_{i=1}^n d^2(X_i(s),\o)- \frac{1}{n} \sum_{i=1}^n d^2(X_i(s),\hat{\mu}(s)) = d^2(\o,\hat{\mu}(s)),
\end{equation*}
whence
	\begin{equation*}
\inf_{s \in [0,1]} \inf_{\om d(\o,\mu(s)) > \epsilon} E(d^2(X(s),\o))-E(d^2(X(s),\mu(s))) > 0
\end{equation*}
and therefore (A1) holds with $\tau(\epsilon)=\epsilon^2$ and (A2) holds with $D=1$ and $\beta=2$. As  the graph adjacency matrices and graph Laplacians of networks with bounded edge connectivities form a subset of a larger finite-dimensional bounded Euclidean space, (A4) is also satisfied  using the arguments in the proof of Proposition 2 of \cite{mull:19:3}. 

In our example of graph Laplacians or graph adjacencies of weighted networks with a fixed number of nodes, assumption (A3) on H\"older continuity translates into  H\"older equicontinuity of the individual edge connectivities. For an example of a practical data generating model that satisfies this criterion, assume that  the network adjacency matrix $X(t)$ is generated as 
\begin{equation*}
\{X(t)\}_{k,l} = W_{kl}(t)  \left \lbrace \frac{1+\sin(V_{kl}\pi(t+U_{kl}))}{2} \right \rbrace,
\end{equation*}
where $W_{kl}(t)$ is a smooth weight function with bounded derivatives, $U_{kl}$ is generated independently from $U(0,1)$ and $V_{kl}$ is generated uniformly from $\{1,2,\dots,10\}$. Here $U_{kl}$ and $V_{kl}$ are the random phase and frequency shift of the sine function which controls where in the time domain and how many times the edge connectivity $\{X(t)\}_{k,l}$ hits zero, in which case there is no  link between nodes $k$ and $l$. Each network in the sample space, with adjacency matrix  $\o(t)$, is of the form 
\begin{equation*}
\{\o(t)\}_{k,l} = W_{kl}(t)  \left \lbrace \frac{1+\sin(v\pi(t+u))}{2} \right \rbrace.
\end{equation*}
Therefore all the networks in the sample space satisfy assumption (A3). We also adopt this construction  as a data generating model in our simulations. 
\ed